\documentclass[conference]{IEEEtran}
\IEEEoverridecommandlockouts
\usepackage{authblk}
\usepackage{cite,soul}
\usepackage{amsmath,amssymb,amsfonts,amsthm}
\usepackage{algorithm}
\usepackage{algorithmic}
\usepackage{graphicx}
\usepackage{textcomp}
\usepackage{xcolor}
\usepackage{wrapfig}
\newtheorem{theorem}{Theorem}
\newtheorem{definition}{Definition}
\newtheorem{proposition}[theorem]{Proposition}
\newtheorem{corollary}[theorem]{Corollary}
\newtheorem{lemma}{Lemma}

\usepackage{comment}
\usepackage{multirow}
\usepackage{booktabs}  
\usepackage{caption}
\usepackage{subcaption}
\usepackage{ctable}
\newtheorem{example}[theorem]{Example}

\newcommand{\Z}{\mathbb{Z}}

\def\BibTeX{{\rm B\kern-.05em{\sc i\kern-.025em b}\kern-.08em
    T\kern-.1667em\lower.7ex\hbox{E}\kern-.125emX}}

\newif\ifcomment
\commentfalse 

\begin{document}

\title{Quantum Matrix-Product Codes: CSS-T Characterization and Maximality} 

%

 \author[1]{Delio Jaramillo-Velez}
 \author[2]{Alessandro Neri}
 \author[3]{Adway Patra}
 \author[4]{Diego Ruano\thanks{\\ \textbf{E-mails}: djaramil@ull.edu.es, alessandro.neri@unina.it, apatra@umd.edu,\\ diego.ruano@uva.es, flavio.salizzoni@mis.mpg.de}}
 \author[5]{Flavio Salizzoni}

\affil[1]{Department of Mathematics, IMAULL, Universidad de La Laguna, Spain}
 \affil[2]{Department of Mathematics and Applications, University of Naples Federico II, Italy }
 \affil[3]{Department of Electrical and Computer Engineering, University of Maryland, United States }
 \affil[4]{IMUVa -- Mathematics Research Institute, Universidad de Valladolid, Spain}
 \affil[5]{Max Planck Institute for Mathematics in the Sciences, Germany }


\maketitle

\begin{abstract}
CSS-T codes are quantum error-correcting codes that play an important role in fault-tolerant quantum computation, as they help mitigate the proliferation of errors. They admit a transversal T-gate and are defined from a pair of nested classical binary linear codes satisfying a specific algebraic condition expressed in terms of their Schur square. We extend this algebraic characterization to the propagation rule known as the $(u \mid u+v)$-construction, that is, the matrix-product code constructed from two constituent codes. Moreover, for cyclic constituent codes, we provide an explicit characterization in terms of the defining cyclotomic sets, which extends existing results for cyclic codes. This framework allows the construction of new and longer CSS-T codes.


\end{abstract}

\section{Introduction}

Quantum error-correcting codes that admit transversal implementations of non-Clifford gates are central to fault-tolerant quantum computation. 
Among non-Clifford gates, the $T$ gate occupies a distinguished role: together with the Clifford group, it generates a universal gate set, and its fault-tolerant realization is one of the main structural constraints in quantum code design. 
CSS-T codes were introduced as Calderbank--Shor--Steane (CSS) stabilizer codes \cite{CSS1,CSS2} admitting a transversal physical $T$ gate acting as a logical operator, or as the identity up to phase \cite{rengaswamy2020classical}. 
Such codes are optimal among nondegenerate stabilizer codes supporting transversal $T$ in a precise sense \cite{rengaswamy2020classical,CSST2}, and they provide an algebraic framework for constructing magic-state-free or magic-state-efficient fault-tolerant schemes.

Algebraically, a CSS-T code is determined by a pair of nested binary linear codes $(C_1,C_2)$ satisfying additional compatibility conditions involving componentwise, or Schur, products. 
A structural breakthrough was the algebraic characterization of binary CSS-T pairs in terms of Schur squares \cite{algebraic_CSS-T}. 
More precisely, a pair $(C_1,C_2)$ gives rise to a CSS-T code if and only if
\[
C_2 \subseteq C_1 \cap (C_1^{\star 2})^\perp,
\]
where $C_1^{\star 2}$ denotes the Schur square of $C_1$. 
This characterization reveals that transversal-$T$ compatibility is controlled by quadratic constraints on classical codes. 
It also led to explicit constructions and classifications in important families, notably cyclic codes, for which the CSS-T condition admits a description in terms of generator polynomials and defining sets \cite{algebraic_CSS-T}. 
These results connect the theory of CSS-T codes to the algebraic study of Schur products of linear codes.

From a classical coding theory perspective, matrix-product codes form a powerful and flexible method for constructing long linear codes from shorter constituent codes. 
Given linear codes $C_1,\dots,C_s$ over $\mathbb{F}_q$ and an $s\times m$ matrix $M$ over $\mathbb{F}_q$, the matrix-product code
\[
C = [C_1,\dots,C_s]M
\]
consists of all matrix products of row vectors from the $C_i$ with $M$~\cite{blackmore2001matrix}. 
This construction generalizes classical schemes such as the $(u \mid u+v)$ (Plotkin) construction and admits strong control over dimension and minimum distance; see \cite{blackmore2001matrix,ozbudak2002note,hernando2009construction}. 
Matrix-product codes have been used extensively to construct classical codes with good parameters and desirable algebraic structure, including nested and cyclic families.

A key result for our purposes is the explicit description of Schur squares of matrix-product codes \cite{square_matrix_prod}. 
Under suitable conditions on the defining matrix $M$, the square $C^{\star 2}$ of a matrix-product code can itself be described as a matrix-product code whose constituents are sums of pairwise Schur products $C_i \star C_j$. 
For instance, in the $(u \mid u+v)$-construction, corresponding to the matrix
\begin{equation}\label{eq:matrixA}
\mathcal{A} :=\begin{pmatrix}1 & 1\\ 0 & 1\end{pmatrix},
\end{equation}the Schur square 
of the $(u \mid u+v)$ matrix-product code is again a $(u \mid u+v)$ matrix-product code whose constituents are described explicitly in terms of $C_1^{\star 2}$ and the mixed product $C_1\star C_2$; see Proposition \ref{chara_squar_pro}. 
This explicit control of $C^{\star 2}$ is particularly well suited for analyzing the CSS-T condition, since the CSS-T condition reduces to orthogonality constraints with $C_1^{\star 2}$ and $C_1\star C_2$, making the quadratic transversal-$T$ condition directly tractable at the level of the original codes $C_1$ and $C_2$.

Our contributions can be summarized as follows:
\begin{enumerate}
    \item Let $C_2\subseteq C_1$ be matrix-product codes obtained from the $(u \mid u+v)$-construction. We provide necessary and sufficient conditions for  $(C_1,C_2)$ to be a CSS-T pair. 
   We also determine when such pairs are maximal in the CSS-T poset.
    \item We specialize the framework to the case in which the constituent codes of $C_1$ and $C_2$ are cyclic.
    In this setting, we express the CSS-T and maximality conditions in terms of the defining cyclotomic sets associated with the codes.
    This yields explicit combinatorial criteria on cyclotomic sets governing when a pair of cyclic matrix-product codes satisfies the CSS-T condition and when it gives rise to a maximal CSS-T pair.
\end{enumerate}
 



\smallskip



\smallskip

\noindent
\textbf{Organization.}
The paper is organized as follows. 
In Section~\ref{sec:preliminaries}, we review CSS-T pairs, Schur products, and matrix-product codes. 
Section~\ref{sec:condition_and_max} treats the CSS-T conditions for matrix-product codes under the $(u \mid u+v)$-construction, and the conditions for maximality. 
Section~\ref{sec:cyclic} specializes the theory to cyclic constituent codes.

\section{Preliminaries}\label{sec:preliminaries}

Throughout this paper, all classical linear codes are binary linear codes, that is, linear subspaces of $\mathbb{F}_2^n$. We omit explicit reference to the base field unless necessary. 

\subsection{Linear codes and duality}

Let $C \subseteq \mathbb{F}_2^n$ be a linear code. The \emph{dimension} of $C$ as a $\mathbb{F}_2$-vector space is denoted by $\dim(C)$. The \emph{minimum distance} of $C$ is
    \[
        \delta(C) = \min\{\mathrm{wt}(c) : c \in C \setminus \{0\}\},
    \] where $\mathrm{wt}(c)$ denotes the Hamming weight of $c$. The \emph{dual code} of $C$ is defined by 
    \[
        C^\perp = \{ x \in \mathbb{F}_2^n : x \cdot c = 0 \text{ for all } c \in C \}.
    \]
    
A binary code $C$ is called \emph{even} if every codeword has even Hamming weight. 
Equivalently, $C$ is even if and only if $C \subseteq \langle \mathbf{1} \rangle^\perp$, 
where $\mathbf{1} = (1,\dots,1)$.

\subsection{Schur (componentwise) product of codes}

For vectors $x = (x_1,\dots,x_n)$ and $y = (y_1,\dots,y_n)$ in $\mathbb{F}_2^n$, 
their \emph{Schur product} (or componentwise product) is defined as
\[
    x \star y = (x_1 y_1,\dots,x_n y_n).
\]
Given linear codes $C, D \subseteq \mathbb{F}_2^n$, their Schur product is
\[
    C \star D = \langle c \star d : c \in C,\ d \in D \rangle.
\]
The \emph{square} of a code $C$ is denoted by $C^{\star 2} := C \star C$.

Since we work over $\mathbb{F}_2$, for every $c \in C$, we have 
$c \star c = c$, and therefore $C \subseteq C^{\star 2}$.
\subsection{CSS and CSS-T pairs}

Let $C_2 \subseteq C_1 \subseteq \mathbb{F}_2^n$ be linear codes. 
The classical CSS construction produces a quantum code with parameters
\[
    [[n, \dim(C_1) - \dim(C_2), d]],
\]
with $
    d = \min\{ \mathrm{wt}(C_1 \setminus C_2),\ \mathrm{wt}(C_2^\perp \setminus C_1^\perp) \}$, see~\cite{CSS1,CSS2}.

CSS-T quantum codes are defined by a pair of classical binary linear codes $(C_1,C_2)$, called a CSS-T pair. They were characterized in~\cite{algebraic_CSS-T}, as stated in the following result.

\begin{theorem}[{\cite[Theorem 2.3]{algebraic_CSS-T}\label{charac_CSS-T}}]
A pair $(C_1, C_2)$ of binary linear codes is a \emph{CSS-T pair} if and only if
$$
C_2 \subseteq C_1,\hbox{ and } C_1^{\star 2} \subseteq C_2^\perp.
$$
\end{theorem}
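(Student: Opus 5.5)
The plan is to start from the original definition of a CSS-T pair from \cite{rengaswamy2020classical} and reduce it to a statement about self-orthogonality of a restricted code. By that definition, $(C_1,C_2)$ is a CSS-T pair when three things hold: $C_2\subseteq C_1$; $C_2$ is even; and for every $x\in C_2$ there is a self-dual code of length $\mathrm{wt}(x)$, supported on $S=\mathrm{supp}(x)$, contained in $C_1^\perp$, that is, inside the shortening of $C_1^\perp$ on $S$. The inclusion $C_2\subseteq C_1$ appears on both sides of the equivalence. So it suffices to show that, given $C_2\subseteq C_1$, the other two conditions together are equivalent to $x\cdot(c\star c')=0$ for all $x\in C_2$ and all $c,c'\in C_1$. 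That last statement says exactly $C_1^{\star 2}\subseteq C_2^\perp$, because $C_1^{\star 2}$ is spanned by the products $c\star c'$.

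Fix $x\in C_2$ with support $S$. Let $D\subseteq\mathbb{F}_2^S$ be the puncturing of $C_1$ to $S$, namely $D=\{c|_S : c\in C_1\}$. I will use the standard fact that, inside $\mathbb{F}_2^S$, the dual $D^\perp$ is identified with the shortening of $C_1^\perp$ on $S$: a vector $y$ supported in $S$ lies in $C_1^\perp$ if and only if $y|_S\perp c|_S$ for all $c\in C_1$. The third condition therefore reads: $D^\perp$ contains a self-dual code $E$ of length $|S|$.

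The key step is the equivalence: \emph{$D^\perp$ contains a self-dual code if and only if $|S|$ is even and $D$ is self-orthogonal.}
\begin{itemize}
\item[($\Rightarrow$)] If $E\subseteq D^\perp$ with $E=E^\perp$, then $D\subseteq E^\perp=E\subseteq D^\perp$. Hence $D$ is self-orthogonal, and $|S|=2\dim E$ is even.
\item[($\Leftarrow$)] Suppose $|S|$ is even and $D\subseteq D^\perp$. Take a maximal self-orthogonal code $E$ with $D\subseteq E$. Over $\mathbb{F}_2$ with even length, a maximal self-orthogonal code is self-dual. To see this, note that $\mathbf 1\in E^\perp$ because self-orthogonal binary codes are even, and that $\mathbf 1\cdot\mathbf 1=0$ since the length is even. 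If $E\subsetneq E^\perp$, the form on $E^\perp/E$ is a nondegenerate symmetric form. One checks that $E^\perp/E$ still contains an isotropic vector, because the weight map is linear modulo $2$ on $E^\perp$. This contradicts maximality. Then $E=E^\perp\subseteq D^\perp$.
\end{itemize}
I expect this extension argument to be the main technical point. The rest can be cited as the classical fact that binary self-orthogonal codes of even length extend to self-dual ones.

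Finally, translate back to $C_1$. The code $D$ is self-orthogonal exactly when $\sum_{i\in S}c_ic'_i=0$ for all $c,c'\in C_1$, that is, when $x\cdot(c\star c')=0$. For the evenness of $x$, note that $x\in C_2\subseteq C_1$ gives $x\cdot(x\star x)=\mathrm{wt}(x)$. So the condition $C_1^{\star2}\perp C_2$ already forces every codeword of $C_2$ to have even weight, which takes care of the evenness requirement in the converse direction. In the forward direction, evenness is part of the definition. Combining these for all $x\in C_2$ gives the stated equivalence.
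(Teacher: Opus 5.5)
The paper does not prove this theorem; it quotes it from \cite{algebraic_CSS-T}. Your argument is correct and is essentially the standard proof of that result: identify the shortening of $C_1^\perp$ on $\mathrm{supp}(x)$ with the dual of the puncturing of $C_1$, extend a binary self-orthogonal code of even length to a self-dual one, and recover evenness of $C_2$ from $x\cdot(x\star x)=\mathrm{wt}(x) \bmod 2$. The step you leave as ``one checks'' closes at once: $\dim E^\perp/E=|S|-2\dim E$ is even, hence at least $2$ if nonzero, so the linear functional $v\mapsto \mathrm{wt}(v) \bmod 2$ on $E^\perp/E$ has a nonzero kernel vector (nondegeneracy of the form alone would not give this over $\mathbb{F}_2$).
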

Moreover, for a  CSS-T pair $(C_1, C_2)$, we have that $\min \{\delta(C_1),\delta(C_2^\perp)\}=\delta(C_2^\perp)$, and the parameters of the corresponding CSS-T code are  (\cite[Corollary 2.5]{algebraic_CSS-T}) $$[[n, \dim(C_1) - \dim(C_2),\ge {\delta}(C_2^\perp)]].$$
In particular, if $(C_1, C_2)$ is a CSS-T pair, then $C_2$ is self-orthogonal.
This Schur-based characterization will be the main tool for establishing CSS-T 
conditions for matrix-product codes.
\subsection{Matrix-product codes}
We begin with the general definition of matrix-product codes.
\begin{definition}[Matrix-Product Code, \cite{blackmore2001matrix}]
Let $C_1,\dots,C_s \subseteq \mathbb{F}_2^n$ be linear codes and let 
$M \in \mathbb{F}_2^{s \times \ell}$ be a matrix with rank $s$ (implying $s \leq \ell$). 
The \emph{matrix-product code} defined by the constituent codes $C_1,\dots,C_s$ and $M$ is
\[
C = [C_1,\dots,C_s]M
= \{ [\mathbf{c}_1,\dots,\mathbf{c}_s]M :
\mathbf{c}_i \in C_i \}.
\]
\end{definition}
Equivalently, identifying each codeword with a vector in $\mathbb{F}_2^{n\ell}$ by reading the entries of the $s \times \ell$ matrix in column-major order, 
a codeword of $C$ can be written in block form as
\[
\left(
\sum_{i=1}^s m_{i1}\mathbf{c}_i,\,
\sum_{i=1}^s m_{i2}\mathbf{c}_i,\,
\dots,\,
\sum_{i=1}^s m_{i\ell}\mathbf{c}_i
\right) \in \mathbb{F}_2^{n\ell},
\]
where $M = (m_{ij})$.

Under the natural hypothesis that the matrix $M$ has full rank, the dimension of $C$ is given by $\dim(C)=\dim(C_1)+\cdots+\dim(C_s)$. Moreover, if the codes are nested, $C_1 \supseteq \cdots \supseteq C_s$, then the minimum distance is given by
\[
\min \{d_1 \delta(C_1), \ldots, d_s \delta(C_s)\},
\]
where $d_i$ is the minimum distance of the code generated by the first $i$ rows of $M$.

Throughout this paper, we work over $\mathbb{F}_2$ and focus mainly on the 
$(u \mid u+v)$-construction, which corresponds to the case $s=\ell=2$ and the matrix $M=\mathcal{A}$  given in \eqref{eq:matrixA}.


In this case, for linear codes $C_1, C_2 \subseteq \mathbb{F}_2^n$, the matrix-product code is
\[
[C_1, C_2]\mathcal{A}
=
\{ (c_1,\, c_1 + c_2) :
c_1 \in C_1,\; c_2 \in C_2 \}
\subseteq \mathbb{F}_2^{2n},
\] and has parameters $$[2n,\dim(C_1)+\dim(C_2),\min\{2\delta(C_1),\delta(C_2)\}].$$

The next results allow us to translate CSS-T conditions for matrix-product codes into
Schur-product conditions on their constituent codes.
\begin{proposition}[{\cite[Proposition 6.2]{blackmore2001matrix}}]\label{dual_matrix_pro}
    Let $C=[C_1,C_2]\mathcal{A}$ be a matrix-product code. Then
    $
    C^{\perp}=[C_1^{\perp},C_{2}^{\perp}](\mathcal{A}^{-1})^{T}.
    $
\end{proposition}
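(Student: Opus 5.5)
We must prove that $C^\perp=[C_1^\perp,C_2^\perp](\mathcal{A}^{-1})^T$ for $C=[C_1,C_2]\mathcal{A}$. Over $\mathbb{F}_2$ we have $\mathcal{A}^{-1}=\mathcal{A}$, so $(\mathcal{A}^{-1})^T=\begin{pmatrix}1&0\\1&1\end{pmatrix}$. Hence the claimed code is
\[
D:=\{(x_1+x_2,\,x_2): x_1\in C_1^\perp,\ x_2\in C_2^\perp\}.
\]
The plan is to show $D\subseteq C^\perp$ and then conclude equality by comparing dimensions.

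For the inclusion, take $c=(c_1,c_1+c_2)\in C$ and $d=(x_1+x_2,x_2)\in D$. The inner product in $\mathbb{F}_2^{2n}$ is the sum of the blockwise inner products:
\[
c\cdot d=c_1\cdot(x_1+x_2)+(c_1+c_2)\cdot x_2=c_1\cdot x_1+c_2\cdot x_2 ,
\]
since the two terms $c_1\cdot x_2$ cancel in characteristic $2$. Both remaining terms vanish because $x_i\in C_i^\perp$. Hence $D\subseteq C^\perp$.

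For the dimension count, note that $(\mathcal{A}^{-1})^T$ is invertible, so the map $(x_1,x_2)\mapsto (x_1+x_2,x_2)$ is injective on $C_1^\perp\times C_2^\perp$ (this is the full-rank dimension formula recalled above). Therefore
\[
\dim D=(n-\dim C_1)+(n-\dim C_2)=2n-\dim C=\dim C^\perp,
\]
and the inclusion $D\subseteq C^\perp$ is an equality.

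There is no real obstacle here. The only care needed is with the transpose and inverse conventions and the column-major block reading: the cancellation must be checked with the blocks exactly as defined. More generally, for an invertible square $M$ the block identity $[\mathbf c]M\cdot[\mathbf x](M^{-1})^T=\sum_i \mathbf c_i\cdot\mathbf x_i$ gives the same argument.
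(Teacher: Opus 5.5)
Your proof is correct: the blockwise computation $c\cdot d=c_1\cdot x_1+c_2\cdot x_2$ shows $[C_1^\perp,C_2^\perp](\mathcal{A}^{-1})^T\subseteq C^\perp$, and since both codes have dimension $2n-\dim C_1-\dim C_2$, the inclusion is an equality. The paper gives no proof of its own and imports the result from Blackmore--Norton; the standard proof there, for any nonsingular $M$, is this same inclusion-plus-dimension argument via the identity $[\mathbf c]M\cdot[\mathbf x](M^{-1})^T=\sum_i\mathbf c_i\cdot\mathbf x_i$ that you note at the end.
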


\begin{proposition}[{\cite[Theorem 3.1]{square_matrix_prod}}]
\label{prop:product_mp_code}
Let $D_1,D_2,D_1',D_2'$ be linear codes and let
$
C_1=[D_1,D_2]\mathcal{A}$, $C_2=[D_1',D_2']\mathcal{A}$.
Then,
\[
C_1\star C_2
=
[D_1\star D_1',\,
D_1\star D_2'
+
D_2\star D_1'
+
D_2\star D_2']\mathcal{A}.
\]
\end{proposition}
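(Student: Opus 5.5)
The plan is to prove the two inclusions directly, working with generators, since both sides are linear codes. The key observation is that a matrix-product code $[E_1,E_2]\mathcal{A}$ with arbitrary linear constituents $E_1,E_2$ is the sum of the two subspaces $\{(e,e): e\in E_1\}$ and $\{(0,f): f\in E_2\}$. This follows because $(e,e+f)=(e,e)+(0,f)$. So it suffices to compare spanning sets.

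For the inclusion $\subseteq$, I take typical codewords $(a,a+b)\in C_1$ with $a\in D_1$, $b\in D_2$, and $(a',a'+b')\in C_2$ with $a'\in D_1'$, $b'\in D_2'$. Expanding the componentwise product blockwise gives
\[
(a,a+b)\star(a',a'+b') = \bigl(a\star a',\; a\star a' + (a\star b' + b\star a' + b\star b')\bigr).
\]
The first block lies in $D_1\star D_1'$. The difference between the second and first blocks lies in $D_1\star D_2' + D_2\star D_1' + D_2\star D_2'$. Hence each generator $c\star c'$ of $C_1\star C_2$ lies in the right-hand side. Since the right-hand side is a linear code, the span $C_1\star C_2$ is contained in it.

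For the inclusion $\supseteq$, I exhibit each generator of the right-hand side as a Schur product of codewords of $C_1$ and $C_2$, using the decomposition above. The generators of the first kind are $(a\star a',a\star a')$; they arise as $(a,a)\star(a',a')$, which is the choice $b=b'=0$. The generators of the second kind arise as follows:
\[
(0,a\star b') = (a,a)\star(0,b'),\qquad (0,b\star a') = (0,b)\star(a',a'),\qquad (0,b\star b') = (0,b)\star(0,b').
\]
Here each factor is a legitimate codeword, namely with $b=0$ or $a=0$ (respectively $b'=0$ or $a'=0$). Linearity then gives that every element $(x, x+y)$ of the right-hand side is a sum of such products, since $x$ is a sum of products $a\star a'$ and $y$ is a sum of mixed and pure products.

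There is no real obstacle. The only point requiring care is the reduction to generators: both $C_1\star C_2$ and the matrix-product code on the right are spans, so it suffices to check inclusions on spanning sets. The decomposition of $[E_1,E_2]\mathcal{A}$ as $\{(e,e)\}+\{(0,f)\}$ is what makes the reverse inclusion immediate. The argument is characteristic-free, so working over $\mathbb{F}_2$ plays no role beyond notation.
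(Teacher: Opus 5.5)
Your proof is correct and complete. The paper does not prove this statement itself; it quotes it from \cite[Theorem 3.1]{square_matrix_prod}, so there is no in-paper argument to compare against. Your direct two-inclusion argument is a valid self-contained verification: the blockwise expansion $(a,a+b)\star(a',a'+b')=(a\star a',\,a\star a'+a\star b'+b\star a'+b\star b')$ gives $\subseteq$. For $\supseteq$, the splitting $[E_1,E_2]\mathcal{A}=\{(e,e):e\in E_1\}+\{(0,f):f\in E_2\}$ works, with each generator written as a product of codewords in which one constituent is set to zero. The argument uses no nestedness of the constituents, which matches the generality of the statement. Taking $D_i'=D_i$ also recovers Proposition~\ref{chara_squar_pro}, since $D_1\star D_2+D_2\star D_1+D_2^{\star 2}=(D_1+D_2)\star D_2$.
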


\begin{proposition}[{\cite[Corollary 3.2]{square_matrix_prod}}]\label{chara_squar_pro}
    Let $C_1$ and $C_2$ be linear codes, and let $C=[C_1,C_2]\mathcal{A}$ be its matrix-product code. Then, $C^{\star2}=[C_1^{\star2}, (C_1+C_2)\star C_2]\mathcal{A}$.
\end{proposition}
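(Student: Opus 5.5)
The statement to prove is Proposition~\ref{chara_squar_pro}: for $C=[C_1,C_2]\mathcal{A}$ we have $C^{\star2}=[C_1^{\star2},(C_1+C_2)\star C_2]\mathcal{A}$. I would derive it directly from Proposition~\ref{prop:product_mp_code}, specialized to $D_1=D_1'=C_1$ and $D_2=D_2'=C_2$. That specialization gives
\[
C^{\star 2}=C\star C=[C_1\star C_1,\; C_1\star C_2+C_2\star C_1+C_2\star C_2]\mathcal{A}.
\]
The first constituent is $C_1^{\star 2}$ by definition. It then remains to identify the second constituent with $(C_1+C_2)\star C_2$.

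\textbf{Commutativity.} The Schur product of vectors is commutative, so the spanning sets of $C_1\star C_2$ and $C_2\star C_1$ coincide and $C_2\star C_1=C_1\star C_2$. The second constituent is therefore $C_1\star C_2+C_2\star C_2$.

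\textbf{Distributivity over code sums.} I would prove that $(A+B)\star D=A\star D+B\star D$ for linear codes $A,B,D$.
\begin{itemize}
\item For $\supseteq$: $A,B\subseteq A+B$ gives $A\star D,\,B\star D\subseteq (A+B)\star D$, and the right-hand side is a linear space, so it contains their sum.
\item For $\subseteq$: a generator $(a+b)\star d$ of $(A+B)\star D$ equals $a\star d+b\star d$, by bilinearity of the componentwise product. This lies in $A\star D+B\star D$, and passing to spans gives the inclusion.
\end{itemize}
Applying this with $A=C_1$, $B=C_2$, $D=C_2$ yields $C_1\star C_2+C_2\star C_2=(C_1+C_2)\star C_2$, which completes the proof.

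There is no real obstacle here. The only step needing care is the distributivity identity, specifically checking that spans behave correctly: the generators of $(C_1+C_2)\star C_2$ are products $(c_1+c_2)\star c_2'$ with $c_2'$ independent of the $c_2$ inside the sum, and bilinearity handles this.
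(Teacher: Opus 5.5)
Your proof is correct and matches the intended route. The paper gives no proof of its own and cites this as a corollary in \cite{square_matrix_prod}; there it follows from the product formula (Proposition~\ref{prop:product_mp_code}) by setting $D_1=D_1'=C_1$ and $D_2=D_2'=C_2$, then simplifying with commutativity of $\star$ and distributivity of $\star$ over code sums, exactly as you do.
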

We conclude this section with an example illustrating a CSS-T code built from a matrix-product code.

\begin{example}[A CSS-T example arising from a matrix-product code]
Let $n=2$, $D_1:=\langle (1,1)\rangle$ and $D_2:=\langle (0,0)\rangle$. 
Define $C_1 := [D_1,D_2]\mathcal{A}$ and $C_2 := C_1.$
Then
\[
C_1 = \{(d,\ d): d\in D_1\} = \langle (1,1,1,1)\rangle.
\]
Since 
$
(C_1^{\star 2})^\perp = C_1^\perp
= \{x\in \mathbb{F}_2^4:\mathrm{wt}(x)\equiv 0 \pmod 2\},
$
we have $C_1^{\star 2}=C_1$ and
$C_1\subseteq (C_1^{\star 2})^\perp$. Hence $
C_2 \subseteq C_1 \cap (C_1^{\star 2})^\perp,
$
so $(C_1,C_2)$ is a CSS-T pair built from a matrix-product code.
\end{example}

\section{CSS-T conditions for matrix-product codes}\label{sec:condition_and_max}

In this section, we determine necessary and sufficient conditions for a pair of matrix-product codes to form a CSS-T pair.


Our first main result is formulated as follows.

\begin{theorem}\label{CSS-T_matrix_product}
Let $D_1,D_2,D_1',D_2'$ be linear codes such that $
D_2 \subseteq D_1
\quad\text{and}\quad
D_2' \subseteq D_1'$.
Define
\[
C_1 = [D_1,D_2]\mathcal{A},
\qquad
C_2 = [D_1',D_2']\mathcal{A}.
\]
Then, assuming $C_2 \subseteq C_1$, the pair $(C_1,C_2)$ is CSS-T if and only if $
D_1^{\star 2} \subseteq (D_2')^\perp
$, and $D_1 \star D_2 \subseteq (D_1')^\perp.$
\end{theorem}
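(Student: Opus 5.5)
The plan is to apply the Schur-square characterization of Theorem~\ref{charac_CSS-T}. Since $C_2\subseteq C_1$ is assumed, $(C_1,C_2)$ is a CSS-T pair if and only if $C_1^{\star 2}\subseteq C_2^\perp$. I will compute both sides as $(u\mid u+v)$-type codes and compare them blockwise.

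First, Proposition~\ref{chara_squar_pro} gives $C_1^{\star 2}=[D_1^{\star 2},(D_1+D_2)\star D_2]\mathcal{A}$. The hypothesis $D_2\subseteq D_1$ gives $D_1+D_2=D_1$, so
\[
C_1^{\star 2}=[D_1^{\star 2},\,D_1\star D_2]\mathcal{A}=\{(a,\,a+b): a\in D_1^{\star 2},\ b\in D_1\star D_2\}.
\]
Rather than invoking Proposition~\ref{dual_matrix_pro}, I will test orthogonality directly against the generic codeword $(c_1',c_1'+c_2')$ of $C_2$, with $c_1'\in D_1'$ and $c_2'\in D_2'$. Expanding over $\mathbb{F}_2$ gives
\[
(a,a+b)\cdot(c_1',c_1'+c_2') = a\cdot c_1' + a\cdot c_1' + a\cdot c_2' + b\cdot c_1' + b\cdot c_2' = a\cdot c_2' + b\cdot c_1' + b\cdot c_2'.
\]
Thus $C_1^{\star 2}\subseteq C_2^\perp$ if and only if this expression vanishes for all $a\in D_1^{\star 2}$, $b\in D_1\star D_2$, $c_1'\in D_1'$ and $c_2'\in D_2'$. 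As a cross-check, Proposition~\ref{dual_matrix_pro} with $\mathcal{A}^{-1}=\mathcal{A}$ over $\mathbb{F}_2$ gives $C_2^\perp=\{(x+y,y): x\in (D_1')^\perp,\ y\in (D_2')^\perp\}$, which yields the same conditions.

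For necessity, I specialize the free variables one at a time.
\begin{itemize}
\item Setting $b=0$ gives $D_1^{\star 2}\subseteq (D_2')^\perp$.
\item Setting $a=0$ and $c_2'=0$ gives $D_1\star D_2\subseteq (D_1')^\perp$.
\end{itemize}

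For sufficiency, assume both conditions hold. Then $a\cdot c_2'=0$ and $b\cdot c_1'=0$ directly. The remaining term $b\cdot c_2'$ vanishes because $D_1\star D_2\subseteq D_1^{\star 2}$, which follows from $D_2\subseteq D_1$, so the first condition also gives $b\perp D_2'$. (Alternatively, $D_2'\subseteq D_1'$ and the second condition give the same conclusion.) Bilinearity extends the vanishing from these generators to all of $C_1^{\star 2}$.

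The only delicate point is the cross term $b\cdot c_2'$. It does not appear among the stated conditions, so one must notice that it is absorbed by one of them through the nesting hypothesis. Everything else is routine bookkeeping.
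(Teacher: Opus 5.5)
Your proof is correct. It shares the paper's first two steps: reduce via Theorem~\ref{charac_CSS-T} to $C_1^{\star 2}\subseteq C_2^\perp$, then use Proposition~\ref{chara_squar_pro} together with $D_1+D_2=D_1$. The finish is different. The paper rewrites $C_2^\perp$ as $[D_1'^{\perp},D_2'^{\perp}](\mathcal{A}^{-1})^T$ via Proposition~\ref{dual_matrix_pro}. It then invokes Lemma~\ref{lemma:mp_subset_equiv}, a blockwise criterion for $[E_1,E_2]\mathcal{A}\subseteq[E_1',E_2'](\mathcal{A}^{-1})^T$. You instead pair a generic $(a,a+b)\in C_1^{\star 2}$ with a generic codeword of $C_2$ and specialize variables.

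Your version is more elementary and self-contained. It also makes transparent that, apart from the simplification $D_1+D_2=D_1$, the nesting hypotheses are not needed for necessity and enter sufficiency only through the cross term $b\cdot c_2'$. In the paper, this role is played by the hypotheses $E_2\subseteq E_1$ and $E_1'\subseteq E_2'$ of Lemma~\ref{lemma:mp_subset_equiv}. With $E_1=D_1^{\star 2}$, $E_2=D_1\star D_2$, $E_1'=D_1'^{\perp}$, $E_2'=D_2'^{\perp}$, these are exactly your two alternative justifications ($D_1\star D_2\subseteq D_1^{\star 2}$, resp.\ $D_2'\subseteq D_1'$).

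The paper's more structural route pays off later. The same lemma, upgraded to the equality version Lemma~\ref{lemma:mp_equivalence}, is what converts Corollary~\ref{cor:mp_maximal_conditions} into the constituent-wise equalities of Theorem~\ref{thm:mp_maximal_conditions_components]}. There one needs blockwise equalities of codes rather than just orthogonality relations.

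One small point: your closing appeal to bilinearity is superfluous. The variables $a$ and $b$ already range over the whole codes $D_1^{\star 2}$ and $D_1\star D_2$, not just over generators.
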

Note that the assumption $C_2 \subseteq C_1$ holds if $D_1' \subseteq D_1$ and $D_2' \subseteq D_2$. To prove Theorem~\ref{CSS-T_matrix_product} we first establish the following lemma.
\begin{lemma}\label{lemma:mp_subset_equiv}
Let $E_1,E_2,E_1',E_2'$ be linear codes such that
$
E_2 \subseteq E_1$, and $E_1' \subseteq E_2'.
$ Then the following conditions are equivalent:
$$
1)\: [E_1,E_2]\mathcal{A}
    \subseteq[E_1',E_2'](\mathcal{A}^{-1})^T,\:2)\:E_1 \subseteq E_2'\hbox{ and } E_2 \subseteq E_1'.
$$
\end{lemma}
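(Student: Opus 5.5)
We compute $(\mathcal{A}^{-1})^T$ explicitly. Over $\mathbb{F}_2$ we have $\mathcal{A}^{-1}=\mathcal{A}$, so
\[
(\mathcal{A}^{-1})^T=\begin{pmatrix}1&0\\1&1\end{pmatrix}.
\]
Hence
\[
[E_1',E_2'](\mathcal{A}^{-1})^T=\{(e_1'+e_2',\,e_2') : e_1'\in E_1',\ e_2'\in E_2'\}.
\]
The hypothesis $E_1'\subseteq E_2'$ lets us rewrite this code. Since $E_1'+E_2'=E_2'$, we get the description
\[
[E_1',E_2'](\mathcal{A}^{-1})^T=\{(x,y) : x,y\in E_2',\ x+y\in E_1'\}.
\]
For the inclusion ``$\subseteq$'', take $x=e_1'+e_2'$ and $y=e_2'$; both lie in $E_2'$ and $x+y=e_1'\in E_1'$. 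For ``$\supseteq$'', given such $(x,y)$, set $e_2'=y$ and $e_1'=x+y$. This membership criterion is the key step, and it is the only place where the nesting $E_1'\subseteq E_2'$ is used.

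\emph{Proof of $2)\Rightarrow 1)$.} Take a codeword $(e_1,e_1+e_2)$ with $e_1\in E_1$ and $e_2\in E_2$. Its first block is $x=e_1\in E_1\subseteq E_2'$. Its second block is $y=e_1+e_2$. Since $E_2\subseteq E_1\subseteq E_2'$, we have $y\in E_2'$. Finally $x+y=e_2\in E_2\subseteq E_1'$. By the criterion, the codeword lies in $[E_1',E_2'](\mathcal{A}^{-1})^T$.

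\emph{Proof of $1)\Rightarrow 2)$.} We test the inclusion on two special codewords.
\begin{itemize}
\item Fix $e_1\in E_1$. Choosing $e_2=0$ gives $(e_1,e_1)\in[E_1,E_2]\mathcal{A}$. By 1) and the criterion, the first block gives $e_1\in E_2'$. Thus $E_1\subseteq E_2'$.
\item Fix $e_2\in E_2$. Since $E_2\subseteq E_1$, we may take $e_1=e_2$. This gives the codeword $(e_2,0)$. By the criterion, the sum of its blocks is $e_2\in E_1'$. Thus $E_2\subseteq E_1'$.
\end{itemize}
The second choice is where the hypothesis $E_2\subseteq E_1$ is needed. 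Alternatively, one can take $e_1=0$, which gives $(0,e_2)$ and also yields $e_2\in E_1'$ directly from the sum condition.

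The main obstacle is the membership criterion: once $[E_1',E_2'](\mathcal{A}^{-1})^T$ is written as the set of pairs $(x,y)$ with $x,y\in E_2'$ and $x+y\in E_1'$, both directions reduce to checking blocks.
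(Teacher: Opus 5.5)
Your proof is correct and follows essentially the same route as the paper. Your membership criterion $\{(x,y): x,y\in E_2',\ x+y\in E_1'\}$ repackages the paper's explicit preimage $e_1'=e_2$, $e_2'=e_1+e_2$, and your special test codewords are instances of the paper's general-element argument for $1)\Rightarrow 2)$; you are also slightly more explicit than the paper in justifying $e_1+e_2\in E_2'$ in the direction $2)\Rightarrow 1)$.
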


\begin{proof}
Assume first $1)$. If
$e_1 \in E_1$ and $e_2 \in E_2$, then there exist 
$e_1' \in E_1'$ and $e_2' \in E_2'$ such that
$
(e_1,\,e_1+e_2)
=
(e_1'+e_2',\,e_2'),
$ since 
\begin{equation}\label{eq:matrixA-1}
 \mathcal{A}^{-1} =
 \begin{pmatrix}
 1 & 1 \\
 0 & 1
 \end{pmatrix}.
\end{equation}
Hence $
e_1 = e_1' + e_2'$, and
$e_1 + e_2 = e_2'$.
Since $e_1' \in E_1' \subseteq E_2'$ and $e_2' \in E_2'$, we get $
e_1 = e_1' + e_2' \in E_2'$.
Moreover, $
e_2 = (e_1+e_2) + e_1
= e_2' + (e_1' + e_2')
= e_1' \in E_1'$.
Thus $E_1 \subseteq E_2'$ and $E_2 \subseteq E_1'$. Conversely, assume that $E_1 \subseteq E_2'$ and $E_2 \subseteq E_1'$. 
Given $(e_1,e_1+e_2)$ with $e_1\in E_1$ and $e_2\in E_2$, define $e_1' := e_2$, and $e_2' := e_1 + e_2$. Then, $e_1'\in E_1'$ and $e_2'\in E_2'$, and
\[
(e_1'+e_2',\,e_2')
=
(e_2 + (e_1+e_2),\,e_1+e_2)
=
(e_1,\,e_1+e_2).
\]
Hence the desired inclusion holds.
\end{proof}
We are now ready to prove Theorem~\ref{CSS-T_matrix_product}.
\begin{proof}[Proof of Theorem \ref{CSS-T_matrix_product}]
By Theorem~\ref{charac_CSS-T}, the pair $(C_1,C_2)$ is CSS-T if and only if $
C_2 \subseteq C_1
\quad\text{and}\quad
C_1^{\star 2} \subseteq C_2^\perp$.
By assumption, $C_2 \subseteq C_1$, so it remains to characterize the condition $
C_1^{\star 2} \subseteq C_2^\perp.
$
From Proposition~\ref{chara_squar_pro} we have
\[
C_1^{\star 2}
=
[D_1^{\star 2},\ (D_1+D_2)\star D_2]\mathcal{A},
\]
and from Proposition~\ref{dual_matrix_pro},
$
C_2^\perp
=
[D_1'^{\perp}, D_2'^{\perp}](\mathcal{A}^{-1})^T.
\qquad
$
Applying Lemma~\ref{lemma:mp_subset_equiv} with
\[
E_1 = D_1^{\star 2}, 
\;\;
E_2 = (D_1+D_2)\star D_2, \;\;  E_1'=D_1'^\perp, \;\; E_2'=D_2'^\perp,
\]
we obtain that $
C_1^{\star 2}\subseteq C_2^\perp
$
is equivalent to $
D_1^{\star 2} \subseteq D_2'^{\perp}
\quad\text{and}\quad
(D_1+D_2)\star D_2 \subseteq D_1'^{\perp}.
$
Finally, since $D_2 \subseteq D_1$, we have $D_1 + D_2 = D_1$, and hence
$
(D_1+D_2)\star D_2 = D_1 \star D_2.
$
This proves the equivalence.
\end{proof}

\subsection{Maximality in the CSS-T poset}

We now study maximal elements in the poset of CSS-T pairs, focusing on matrix-product codes.

\begin{definition}[{\cite{algebraic_CSS-T}}] 
Let $\mathcal{P}$ denote the poset of CSS-T pairs ordered by
$
(C_1,C_2) \le (C_1',C_2')$ if and only if 
$C_i \subseteq C_i'$ for $i=1,2.$
Let $\mathcal{M} \subseteq \mathcal{P}$ be the subposet consisting of CSS-T pairs
$(C_1,C_2)$ such that both $C_1$ and $C_2$ are matrix-product codes.
\end{definition}

\begin{definition}
Let $(C_1,C_2)\in\mathcal{M}$.
\begin{itemize}
    \item We say that $(C_1,C_2)$ is \emph{maximal in $C_1$} if 
    $(C_1,C_2)\le (C_1',C_2)$ with $C_1'$ a matrix-product code 
    implies $C_1'=C_1$.
    \item Similarly, $(C_1,C_2)$ is \emph{maximal in $C_2$} if 
    $(C_1,C_2)\le (C_1,C_2')$ with $C_2'$ a matrix-product code 
    implies $C_2'=C_2$.
\end{itemize}
\end{definition}

We next give a characterization of equality for the matrix-product codes considered in CSS-T pairs.

\begin{lemma}\label{lemma:mp_equivalence}
Let $E_1,E_2,E_1',E_2'$ be linear codes such that
$
E_2 \subseteq E_1$, and $
E_1' \subseteq E_2'.
$ Then the following are equivalent:
$$
1)\:[E_1,E_2]\mathcal{A}=[E_1',E_2'](\mathcal{A}^{-1})^T,\: 2)\: E_1 = E_2'\hbox{ and }E_2 = E_1'.
$$

\end{lemma}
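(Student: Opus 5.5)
The plan is to reduce the equality to two inclusions. The direction $\subseteq$ comes from Lemma~\ref{lemma:mp_subset_equiv}, and the direction $\supseteq$ comes from a symmetric argument with the roles of the two constructions swapped. The key observation is that, over $\mathbb{F}_2$, $(\mathcal{A}^{-1})^T=\begin{pmatrix}1&0\\1&1\end{pmatrix}$, so that
\[
[E_1',E_2'](\mathcal{A}^{-1})^T=\{(e_1'+e_2',\,e_2'): e_1'\in E_1',\ e_2'\in E_2'\}.
\]

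For 2) $\Rightarrow$ 1), I will assume $E_1=E_2'$ and $E_2=E_1'$. Then condition 2) of Lemma~\ref{lemma:mp_subset_equiv} holds trivially, so $[E_1,E_2]\mathcal{A}\subseteq[E_1',E_2'](\mathcal{A}^{-1})^T$. For the reverse inclusion, take $(e_1'+e_2',e_2')$ and set $e_1:=e_1'+e_2'$ and $e_2:=e_1'$. Since $E_1'=E_2\subseteq E_1=E_2'$, we get $e_1\in E_1$ and $e_2\in E_2$. Moreover $e_1+e_2=e_2'$, so $(e_1,e_1+e_2)=(e_1'+e_2',e_2')$. Alternatively, the reverse inclusion follows by comparing dimensions: both codes have dimension $\dim E_1+\dim E_2=\dim E_2'+\dim E_1'$, since both matrices have full rank.

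For 1) $\Rightarrow$ 2), equality gives in particular the inclusion $\subseteq$, so Lemma~\ref{lemma:mp_subset_equiv} yields $E_1\subseteq E_2'$ and $E_2\subseteq E_1'$. For the reverse containments, I will run the argument of Lemma~\ref{lemma:mp_subset_equiv} backwards. Let $e_1'\in E_1'$ and $e_2'\in E_2'$. By 1) there exist $e_1\in E_1$ and $e_2\in E_2$ with $(e_1'+e_2',e_2')=(e_1,e_1+e_2)$. Adding the two blocks gives $e_1'=e_2\in E_2$, so $E_1'\subseteq E_2$. Taking $e_1'=0$ gives $e_2'=e_1+e_2\in E_1$, because $E_2\subseteq E_1$; hence $E_2'\subseteq E_1$. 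The hypotheses $E_2\subseteq E_1$ and $E_1'\subseteq E_2'$ are exactly what is needed here, just as in the proof of Lemma~\ref{lemma:mp_subset_equiv}.

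There is no real obstacle. The only point that needs care is using the nestedness hypothesis in the right place in the reverse inclusion, namely to conclude $e_1+e_2\in E_1$.
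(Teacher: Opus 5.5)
Your proof is correct and follows the paper's route: you split the equality into the two inclusions, obtain $E_1\subseteq E_2'$ and $E_2\subseteq E_1'$ from Lemma~\ref{lemma:mp_subset_equiv}, and obtain $E_1'\subseteq E_2$ and $E_2'\subseteq E_1$ from the reverse inclusion. The only difference is that you check the reverse inclusion directly on elements (or, for 2)$\Rightarrow$1), by the dimension count). The paper simply says ``again by Lemma~\ref{lemma:mp_subset_equiv}'', which tacitly relies on the mirror-image version of that lemma, and your computation supplies it explicitly.
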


\begin{proof}
The equality holds if and only if both inclusions hold.
First, assume $[E_1,E_2]\mathcal{A}
\subseteq
[E_1',E_2'](\mathcal{A}^{-1})^T$.
By Lemma~\ref{lemma:mp_subset_equiv}, this is equivalent to $
E_1 \subseteq E_2',$ 
$E_2 \subseteq E_1'$.
Similarly, the reverse inclusion $
[E_1',E_2'](\mathcal{A}^{-1})^T
\subseteq
[E_1,E_2]\mathcal{A}$
is equivalent (again by Lemma~\ref{lemma:mp_subset_equiv}) to
$E_2' \subseteq E_1$ and $
E_1' \subseteq E_2$. Combining the two inclusions yields $
E_1 = E_2'$ and $E_2 = E_1'$.
Conversely, if $E_1 = E_2'$ and $E_2 = E_1'$, both inclusions hold,
and therefore the matrix-product codes are equal.
\end{proof}

As a consequence, in Corollary~\ref{cor:mp_maximal_conditions} we extend the following result about maximal pairs in the poset.

\begin{theorem}[{\cite[Theorem 3.11]{algebraic_CSS-T}}]
\label{thm:csst_maximal}
Let $C_2 \subset C_1 \subset \mathbb{F}_2^n$ be linear codes.  
The pair $(C_1,C_2)$ is maximal in $\mathcal{P}$ if and only if
$
C_1^{\perp} = C_1 \star C_2\hbox{ and }C_2^{\perp} = C_1^{\star 2}.
$
\end{theorem}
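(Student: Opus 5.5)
We prove the two implications separately. The key point is that the CSS-T pairs above a given pair $(C_1,C_2)$ are controlled by two monotone operations. The first is $C_1 \mapsto C_1^{\star 2}$, which is increasing in $C_1$. The second is $C_2 \mapsto C_2^\perp$, which is decreasing in $C_2$. Throughout we use Theorem~\ref{charac_CSS-T}: $(C_1,C_2)$ is CSS-T iff $C_2\subseteq C_1$ and $C_1^{\star 2}\subseteq C_2^\perp$. The latter condition is equivalent to $C_1\star C_2\subseteq C_1^\perp$, because $\langle a\star b, c\rangle=\langle a\star c, b\rangle$ for all vectors $a,b,c$. In the same way, $C_1^{\star 2}\subseteq C_2^\perp$ is equivalent to $C_2\subseteq (C_1^{\star 2})^\perp$.

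\emph{Sufficiency.} Assume $C_1^\perp=C_1\star C_2$ and $C_2^\perp=C_1^{\star 2}$, and let $(C_1,C_2)\le(C_1',C_2')$ be CSS-T. Since $C_2\subseteq C_2'$, the constraint on $C_1'$ gives
\[
C_1^{\star 2}\subseteq C_1'^{\star 2}\subseteq C_2'^\perp\subseteq C_2^\perp=C_1^{\star 2},
\]
so $C_2'^\perp=C_2^\perp$ and hence $C_2'=C_2$. Using this, the dual form of the condition gives
\[
C_1\star C_2\subseteq C_1'\star C_2\subseteq C_1'^\perp\subseteq C_1^\perp=C_1\star C_2 ,
\]
so $C_1'^\perp=C_1^\perp$ and hence $C_1'=C_1$. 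Therefore the pair is maximal.

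\emph{Necessity.} Assume $(C_1,C_2)$ is maximal. We already know $C_2\subseteq (C_1^{\star 2})^\perp$. The pair $(C_1,(C_1^{\star 2})^\perp)$ is still CSS-T: the Schur condition holds trivially, and the inclusion $(C_1^{\star 2})^\perp\subseteq C_1$ follows from $C_1^\perp\subseteq C_1^{\star 2}$. This last inclusion is where I expect the main obstacle. It does not hold for an arbitrary $C_1$, so we must first use maximality to enlarge $C_1$. The plan is as follows. Maximality in the first coordinate says that $C_1$ is a maximal code with $C_1\star C_2\subseteq C_1^\perp$ and $C_2\subseteq C_1$. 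Take any $x\in (C_1\star C_2)^\perp$, so that $x\star C_2\perp C_1$. We show that $C_1+\langle x\rangle$ still satisfies the condition. This uses the binary identity $x\star x=x$: the new products $x\star c_2$ are orthogonal to $C_1$ by the choice of $x$, and they are orthogonal to $x$ itself because $\langle x\star c_2,x\rangle=\langle x,c_2\rangle$, which vanishes since $x\star\mathbf{1}$-type terms reduce to $\langle x, c_2\rangle=\langle x\star c_2,\mathbf 1\rangle$. This last point needs care; we would handle it by using that $C_2\star C_2\subseteq C_1\star C_2$ forces $C_2$ to be self-orthogonal and even. From this we deduce $C_1=(C_1\star C_2)^\perp$, which is the first equation.

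Maximality in $C_2$ then gives $C_2=C_1\cap (C_1^{\star 2})^\perp$. Here $(C_1^{\star 2})^\perp\subseteq (C_1\star C_2)^\perp=C_1$, so in fact $C_2=(C_1^{\star 2})^\perp$, which is the second equation. The most delicate verification is the enlargement step in the necessity direction; everything else reduces to monotonicity and duality.
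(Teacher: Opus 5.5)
The paper does not prove this statement; it imports it from \cite{algebraic_CSS-T}. Your overall strategy is the natural one. For sufficiency you use the two sandwich chains. For necessity you use coordinatewise maximality: one-vector extensions of $C_1$, then replacing $C_2$ by $C_1\cap(C_1^{\star 2})^\perp$. The one-vector extension step is exactly what Proposition~\ref{prop:mp_extensions} mirrors for matrix-product codes. Your sufficiency argument is complete as written.

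There is, however, a genuine gap at the step you flagged yourself. To show that $(C_1+\langle x\rangle,C_2)$ is CSS-T for $x\in(C_1\star C_2)^\perp$, you need $\langle x\star c_2,x\rangle=\langle x,c_2\rangle=0$ for every $c_2\in C_2$, that is, $x\in C_2^\perp$. Your proposed justification is that $C_2\star C_2\subseteq C_1\star C_2$ makes $C_2$ self-orthogonal and even. This does not give what you need. Self-orthogonality and evenness are statements about vectors inside $C_2$, and they say nothing about an arbitrary vector $x$ outside $C_1$. The remark about ``$x\star\mathbf{1}$-type terms'' does not close this either.

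The missing observation is idempotence combined with $C_2\subseteq C_1$. Every $c_2\in C_2$ satisfies $c_2=c_2\star c_2\in C_1\star C_2$, so $C_2\subseteq C_1\star C_2$. Hence $x\in(C_1\star C_2)^\perp\subseteq C_2^\perp$. Equivalently, $\langle x,c_2\rangle=\langle x\star c_2,c_2\rangle$, and this vanishes because $c_2\in C_1$ and $x\star C_2\perp C_1$.

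With this repaired, you should also say explicitly that maximality forces $C_1+\langle x\rangle=C_1$, so $x\in C_1$ and $(C_1\star C_2)^\perp\subseteq C_1$. You do not ``enlarge'' $C_1$; maximality forbids that. Combine this with $C_1\subseteq(C_1\star C_2)^\perp$, which is the CSS-T condition in dual form, to get $C_1^\perp=C_1\star C_2$. Your remaining deduction of $C_2=(C_1^{\star 2})^\perp$ via $C_1\star C_2\subseteq C_1^{\star 2}$ is then correct.
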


\begin{corollary}\label{cor:mp_maximal_conditions}
Let $(C_1,C_2)$ be a CSS-T pair with
\[
C_1=[D_1,D_2]\mathcal{A},
\qquad
C_2=[D_1',D_2']\mathcal{A},
\]
where $D_2\subseteq D_1$ and $D_2'\subseteq D_1'$.  
Then $(C_1,C_2)$ is maximal in $\mathcal{M}$ if and only if
\begin{align*}
C_1^\perp
&=
[D_1^\perp,D_2^\perp](\mathcal{A}^{-1})^T
=
C_1\star C_2,\\
C_2^\perp
&=
[D_1'^\perp,D_2'^\perp](\mathcal{A}^{-1})^T
=
C_1^{\star 2}.
\end{align*}
\end{corollary}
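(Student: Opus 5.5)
We plan to mirror the proof of Theorem~\ref{thm:csst_maximal} inside the subposet $\mathcal{M}$, using that every code appearing in the argument is itself a matrix-product code. The first equalities in each line are just Proposition~\ref{dual_matrix_pro}, so the content is: $(C_1,C_2)$ is maximal in $\mathcal{M}$ iff $C_1^\perp=C_1\star C_2$ and $C_2^\perp=C_1^{\star2}$. By Proposition~\ref{prop:product_mp_code} and Proposition~\ref{chara_squar_pro}, $C_1\star C_2$ and $C_1^{\star 2}$ are $(u\mid u+v)$ matrix-product codes. By Proposition~\ref{dual_matrix_pro} together with Lemma~\ref{lemma:mp_equivalence}, the code $[E_1',E_2'](\mathcal{A}^{-1})^T$ with $E_1'\subseteq E_2'$ equals $[E_2',E_1']\mathcal{A}$. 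Hence $C_1^\perp$ and $C_2^\perp$ also lie in the class of matrix-product codes. We will verify the nestedness hypotheses via $D_2\subseteq D_1$, which gives $D_1^\perp\subseteq D_2^\perp$.

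For sufficiency, we assume the two equalities and take $(C_1,C_2)\le(C_1',C_2')$ in $\mathcal{M}$. The CSS-T condition gives $C_2'\subseteq C_1'\cap (C_1'^{\star 2})^\perp$. Using monotonicity of Schur products and duals, we get
\[
C_2'\subseteq (C_1'\star C_1')^\perp\subseteq (C_1\star C_1)^\perp = C_2,
\]
so $C_2'=C_2$. Similarly, $C_1'\star C_2\subseteq C_1'^{\star2}\subseteq C_2^\perp$, so $C_1'\subseteq (C_1\star C_2)^\perp$. Here we use $x\in C_1'$, $y\in C_2$, $z\in C_1'\star C_2$; more directly, $\langle x\star y,1\rangle$ arguments give $C_1'\subseteq (C_2\star C_1)^{\perp}$, since $(x\star y)\cdot z = x\cdot(y\star z)$. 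Because $C_1\subseteq C_1'$, we have $C_2\star C_1\subseteq C_2\star C_1'$. The needed inclusion follows by writing $x\cdot(c_1\star c_2)=(x\star c_1)\cdot c_2$ with $x\star c_1\in C_1'^{\star2}\subseteq C_2^\perp$. Thus $C_1'\subseteq (C_1\star C_2)^\perp=C_1$.

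For necessity, we argue by contrapositive, as in \cite{algebraic_CSS-T}. Trivially $C_2\subseteq C_1^{\star2\perp}$ and $C_1\subseteq (C_1\star C_2)^\perp$. Suppose $C_2\subsetneq (C_1^{\star2})^\perp$. Set $C_2':=(C_1^{\star2})^\perp$, which is a matrix-product code by the remark above. We need $C_2'\subseteq C_1$, which is the main obstacle. The natural approach is the one from the binary case: since $C_1\subseteq C_1^{\star2}$, we get $C_1^{\star2\perp}\subseteq C_1^\perp$, which is the wrong direction. So instead we enlarge with $C_2'':=C_2+ (C_1^{\star2})^\perp\cap C_1$, and we must check that this is again of matrix-product form. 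If $C_1\cap(C_1^{\star 2})^\perp$ is a matrix-product code, intersections being computed constituentwise via Lemma~\ref{lemma:mp_subset_equiv}, then the sum of two $(u\mid u+v)$ codes is $[\,\cdot+\cdot,\cdot+\cdot\,]\mathcal{A}$, which gives a strictly larger pair in $\mathcal{M}$. The analogous enlargement $C_1'=C_1+(C_1\star C_2)^\perp$ handles the first equality. In that case we check the CSS-T property of $(C_1',C_2)$ using Theorem~\ref{CSS-T_matrix_product} on the constituents. The delicate point throughout is verifying that intersections and sums of these $(u\mid u+v)$ codes remain matrix-product codes with nested constituents, so we will do that computation explicitly.
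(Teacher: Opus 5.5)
Your sufficiency argument is correct; it actually proves maximality in $\mathcal{P}$, hence in $\mathcal{M}$. The necessity direction, however, has a genuine gap in both of its steps.

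\textbf{The $C_1$-enlargement is not CSS-T.} Since $C_1\subseteq (C_1\star C_2)^\perp$, your enlargement $C_1'=C_1+(C_1\star C_2)^\perp$ is just $(C_1\star C_2)^\perp$. The pair $\bigl((C_1\star C_2)^\perp,C_2\bigr)$ is in general \emph{not} CSS-T. The condition is quadratic, so adjoining a whole subspace creates products $x\star x'$ of two new vectors, and these need not be orthogonal to $C_2$. What is safe is adjoining one vector $x\perp C_1\star C_2$ at a time. Then the only new products are $x\star C_1$ and $x$ itself, and $C_2\subseteq C_1\star C_2$. To stay inside $\mathcal{M}$ this must be done on the constituents, because $C_1+\langle x\rangle$ is usually not of the form $[E_1,E_2]\mathcal{A}$.

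\textbf{The $C_2$-enlargement need not be strict.} You have $C_2''=C_2+\bigl(C_1\cap(C_1^{\star2})^\perp\bigr)=C_1\cap(C_1^{\star2})^\perp$. This can equal $C_2$ even when $C_2\subsetneq(C_1^{\star2})^\perp$. So it produces no strictly larger pair and never yields $C_2^\perp=C_1^{\star2}$; the obstacle you identify is left open.

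\textbf{Example.} Both failures occur for $n=4$ and $D_1=D_2=D_1'=D_2'=\langle(1,1,1,1)\rangle$. Here $C_1=C_2=C_1^{\star2}=C_1\star C_2$, so $C_2''=C_2$. Also $(C_1\star C_2)^\perp=[E,E]\mathcal{A}$ with $E$ the even-weight code, and $E^{\star2}=\mathbb{F}_2^4\not\subseteq E=D_2'^\perp$. So $(C_1',C_2)$ is not CSS-T by Theorem~\ref{CSS-T_matrix_product}. Yet the pair is not maximal: adjoining $y=(1,1,0,0)$ to $D_1$ works. The point you flag as delicate, that sums and intersections stay in $(u\mid u+v)$ form, is not the real difficulty.

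\textbf{The missing idea.} Prove the $C_1$-equality first, by constituentwise single-vector extensions, and then deduce the $C_2$-equality.
\begin{itemize}
\item Since $C_2\subseteq C_1$ forces $D_1'\subseteq D_1$ and $D_2'\subseteq D_2$, set $F=D_1\star D_1'$ and $G=D_1\star D_2'+D_2\star D_1'$. Then $G\subseteq F$ and $(C_1\star C_2)^\perp=[G^\perp,F^\perp]\mathcal{A}\supseteq C_1$.
\item If $y\in G^\perp\setminus D_1$, Proposition~\ref{prop:mp_extensions}(a) applies to $D_1+\langle y\rangle$, using $D_2'=D_2'\star D_2'\subseteq D_1\star D_2'$.
\item If $D_1=G^\perp$ and $y\in F^\perp\setminus D_2$, then $y\in F^\perp\subseteq G^\perp=D_1$, and part (b) applies to $D_2+\langle y\rangle$.
\item Hence maximality in $\mathcal{M}$ gives $C_1=(C_1\star C_2)^\perp$.
\item Now $C_1\star C_2\subseteq C_1^{\star2}$ gives $(C_1^{\star2})^\perp\subseteq C_1$, which resolves your obstacle. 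The code $(C_1^{\star2})^\perp=[(D_1\star D_2)^\perp,(D_1^{\star2})^\perp]\mathcal{A}$ is a nested matrix-product code that contains $C_2$ and forms a CSS-T pair with $C_1$. By maximality it equals $C_2$.
\end{itemize}

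\textbf{Comparison with the paper.} The paper's proof simply quotes Theorem~\ref{thm:csst_maximal} and rewrites both sides via Propositions~\ref{dual_matrix_pro}, \ref{prop:product_mp_code} and \ref{chara_squar_pro}. The constituent-level argument above is what justifies passing from maximality in $\mathcal{M}$ to maximality in $\mathcal{P}$, which that citation uses tacitly.
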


\begin{proof}
The result follows by combining Theorem~\ref{thm:csst_maximal}
with Proposition~\ref{dual_matrix_pro},
Proposition~\ref{prop:product_mp_code},
and Proposition~\ref{chara_squar_pro}.
\end{proof}

We can also state the previous result in terms of the constituent codes.

\begin{theorem}\label{thm:mp_maximal_conditions_components}
Let $(C_1,C_2)$ be a CSS-T pair with
\[
C_1=[D_1,D_2]\mathcal{A},
\qquad
C_2=[D_1',D_2']\mathcal{A},
\]
where $D_2\subseteq D_1$ and $D_2'\subseteq D_1'$.  
Then $(C_1,C_2)$ is maximal in $\mathcal{M}$ if and only if
\begin{enumerate}
    \item $D_1^{\perp}
    =
    D_1\star D_2'
    +
    D_2\star D_1'
    +
    D_2\star D_2'$,
    \item $D_2^{\perp}
    =
    D_1\star D_1'$,
    \item $D_1'^{\perp}
    =
    D_1\star D_2$,
    \item $D_2'^{\perp}
    =
    D_1^{\star 2}$.
\end{enumerate}
\end{theorem}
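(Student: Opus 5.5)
The proof starts from Corollary~\ref{cor:mp_maximal_conditions}. Maximality of $(C_1,C_2)$ in $\mathcal{M}$ is equivalent to two equalities of codes: $C_1^\perp = C_1\star C_2$ and $C_2^\perp = C_1^{\star 2}$. Each side of each equality has a matrix-product description. By Proposition~\ref{dual_matrix_pro}, the duals are $C_1^\perp=[D_1^\perp,D_2^\perp](\mathcal{A}^{-1})^T$ and $C_2^\perp=[D_1'^\perp,D_2'^\perp](\mathcal{A}^{-1})^T$. By Proposition~\ref{prop:product_mp_code},
\[
C_1\star C_2=[D_1\star D_1',\,D_1\star D_2'+D_2\star D_1'+D_2\star D_2']\mathcal{A}.
\]
By Proposition~\ref{chara_squar_pro}, together with $D_1+D_2=D_1$ (as in the proof of Theorem~\ref{CSS-T_matrix_product}),
\[
C_1^{\star2}=[D_1^{\star2},\,D_1\star D_2]\mathcal{A}.
\]
Each equality therefore has the form $[E_1,E_2]\mathcal{A}=[E_1',E_2'](\mathcal{A}^{-1})^T$, which is exactly the setting of Lemma~\ref{lemma:mp_equivalence}.

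For the first equality I would take $E_1=D_1\star D_1'$, $E_2=D_1\star D_2'+D_2\star D_1'+D_2\star D_2'$, $E_1'=D_1^\perp$ and $E_2'=D_2^\perp$. Lemma~\ref{lemma:mp_equivalence} then gives $D_1\star D_1'=D_2^\perp$ and $D_1\star D_2'+D_2\star D_1'+D_2\star D_2'=D_1^\perp$, which are conditions 2) and 1). For the second equality I would take $E_1=D_1^{\star2}$, $E_2=D_1\star D_2$, $E_1'=D_1'^\perp$ and $E_2'=D_2'^\perp$. This yields $D_1^{\star2}=D_2'^\perp$ and $D_1\star D_2=D_1'^\perp$, which are conditions 4) and 3). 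Conversely, if 1)--4) hold, the same lemma gives both equalities of matrix-product codes, and Corollary~\ref{cor:mp_maximal_conditions} then gives maximality.

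The main obstacle is checking the nesting hypotheses $E_2\subseteq E_1$ and $E_1'\subseteq E_2'$ of Lemma~\ref{lemma:mp_equivalence} (or of Lemma~\ref{lemma:mp_subset_equiv}).
\begin{itemize}
\item On the dual side this is immediate: $D_2\subseteq D_1$ gives $D_1^\perp\subseteq D_2^\perp$, and $D_2'\subseteq D_1'$ gives $D_1'^\perp\subseteq D_2'^\perp$.
\item For the square, $D_2\subseteq D_1$ gives $D_1\star D_2\subseteq D_1^{\star2}$.
\item For $C_1\star C_2$ we need $D_1\star D_2'+D_2\star D_1'+D_2\star D_2'\subseteq D_1\star D_1'$. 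Each summand is a product of a subcode of $D_1$ with a subcode of $D_1'$, so this also holds.
\end{itemize}
With these checks done, the equivalence is a direct substitution, and no extra hypothesis beyond $D_2\subseteq D_1$ and $D_2'\subseteq D_1'$ is needed.
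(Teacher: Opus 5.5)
Your proposal is correct and matches the paper's proof, which simply applies Lemma~\ref{lemma:mp_equivalence} to the two equalities of Corollary~\ref{cor:mp_maximal_conditions}, with $C_1\star C_2$ and $C_1^{\star2}$ written as $(u\mid u+v)$ codes via Propositions~\ref{prop:product_mp_code} and~\ref{chara_squar_pro}. You also verify the nesting hypotheses $E_2\subseteq E_1$ and $E_1'\subseteq E_2'$ for both applications of the lemma, which the paper leaves implicit, and those checks are correct.
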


\begin{proof}
Apply Lemma~\ref{lemma:mp_equivalence} to the equalities
in Corollary~\ref{cor:mp_maximal_conditions}.
\end{proof}

The following result gives propagation rules in the poset of CSS-T pairs.

\begin{proposition}\label{prop:mp_extensions}
Let $(C_1,C_2)$ be a CSS-T pair with
\[
C_1=[D_1,D_2]\mathcal{A},
\qquad
C_2=[D_1',D_2']\mathcal{A},
\]
where $D_2\subseteq D_1$ and $D_2'\subseteq D_1'$.
Let $y\in\mathbb{F}_2^n$. Then the following hold:

\begin{enumerate}
\item[(a)]
$([D_1+\langle y\rangle,D_2]\mathcal{A},[D_1',D_2']\mathcal{A})$ is CSS-T if and only if $
D_1\star y + \langle y\rangle \subseteq (D_2')^\perp$, and $y\star D_2 \subseteq (D_1')^\perp$.
\item[(b)] If $y\in D_1$,
$([D_1,D_2+\langle y\rangle]\mathcal{A},[D_1',D_2']\mathcal{A})$ is CSS-T if and only if $
y\star D_1 \subseteq (D_1')^\perp$.
\item[(c)]
If $y\in D_1$, $([D_1,D_2]\mathcal{A},[D_1'+\langle y\rangle,D_2']\mathcal{A})$ is CSS-T if and only if $
y \in (D_1\star D_2)^\perp$.
\item[(d)]
If $y\in D_1'\cap D_2$, $([D_1,D_2]\mathcal{A},[D_1',D_2'+\langle y\rangle]\mathcal{A})$ is CSS-T if and only if $y \in (D_1^{\star2})^\perp$.
\end{enumerate}
\end{proposition}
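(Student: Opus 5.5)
The plan is to apply Theorem~\ref{CSS-T_matrix_product} to each modified pair and then use the fact that $(C_1,C_2)$ is already CSS-T to cancel the conditions that are automatically satisfied. Theorem~\ref{CSS-T_matrix_product} needs two hypotheses checked in each case: the nesting of the constituents (second constituent inside the first) and the inclusion $C_2\subseteq C_1$ of the new pair. Once these hold, the new pair is CSS-T if and only if ``first constituent of $C_1$ squared $\perp$ second constituent of $C_2$'' and ``product of the constituents of $C_1$ $\perp$ first constituent of $C_2$''. Because $(C_1,C_2)$ is CSS-T, we already know $D_1^{\star2}\subseteq D_2'^\perp$ and $D_1\star D_2\subseteq D_1'^\perp$. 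Bilinearity of $\star$ and linearity of $\perp$ then reduce each new condition to the new generators only.

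\textbf{Case (a).} Here $D_2\subseteq D_1+\langle y\rangle$, and the enlarged $C_1$ still contains $C_2$. Expanding, $(D_1+\langle y\rangle)^{\star2}=D_1^{\star2}+D_1\star y+\langle y\rangle$, using $y\star y=y$. Also $(D_1+\langle y\rangle)\star D_2=D_1\star D_2+y\star D_2$. Removing the parts already known to be orthogonal gives exactly the stated conditions.

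\textbf{Case (b).} With $y\in D_1$, the constituent $D_2+\langle y\rangle$ stays inside $D_1$. The square condition is unchanged. The product becomes $D_1\star D_2+D_1\star y$, so the only new requirement is $y\star D_1\subseteq D_1'^\perp$.

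\textbf{Case (c).} Adding $y$ to $D_1'$ keeps $D_2'\subseteq D_1'+\langle y\rangle$. The new condition is $D_1\star D_2\subseteq (D_1'+\langle y\rangle)^\perp=D_1'^\perp\cap y^\perp$, which reduces to $y\in(D_1\star D_2)^\perp$.

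\textbf{Case (d).} Adding $y$ to $D_2'$ requires $y\in D_1'$ so that the new constituents stay nested, which is why $y\in D_1'$ is assumed. The new condition is $D_1^{\star2}\subseteq D_2'^\perp\cap y^\perp$, i.e.\ $y\in(D_1^{\star2})^\perp$.

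The main obstacle is verifying the containment $C_2\subseteq C_1$ in cases (c) and (d), since Theorem~\ref{CSS-T_matrix_product} presupposes it.
\begin{itemize}
\item In (d), $y\in D_2$ gives $(0,y)=[0,y]\mathcal{A}\in C_1$, so the new generator lies in $C_1$ and the enlarged $C_2$ is still contained in $C_1$.
\item In (c), the new codeword is $(y,y)=[y,0]\mathcal{A}$, and $y\in D_1$ gives $(y,y)\in C_1$.
\item In (a) and (b), $C_1$ only grows, so containment is immediate.
\end{itemize}
With these containments in place, Theorem~\ref{CSS-T_matrix_product} applies in all four cases and the equivalences follow.
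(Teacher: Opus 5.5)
Your proof is correct and follows the same route as the paper: you apply Theorem~\ref{CSS-T_matrix_product} to each modified pair, expand the Schur products bilinearly (using $y\star y=y$), and discard the parts already guaranteed by the original CSS-T conditions. The one addition is that you explicitly check the nesting hypotheses and the containment $C_2\subseteq C_1$ in each case, via $(y,y)=[y,0]\mathcal{A}$ and $(0,y)=[0,y]\mathcal{A}$; the paper leaves these checks implicit, and they explain why $y\in D_1$ and $y\in D_1'\cap D_2$ are assumed.
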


\begin{proof}
By Theorem~\ref{CSS-T_matrix_product}, a pair
$
([E_1,E_2]\mathcal{A},[F_1,F_2]\mathcal{A})
$
is CSS-T if and only if $
E_1^{\star2} \subseteq F_2^\perp
\quad\text{and}\quad
E_1\star E_2 \subseteq F_1^\perp.
$

Since $(C_1,C_2)$ is CSS-T, we already have
\[
D_1^{\star2} \subseteq (D_2')^\perp,
\qquad
D_1\star D_2 \subseteq (D_1')^\perp.
\]
\smallskip
(a) Replacing $D_1$ by $D_1+\langle y\rangle$, the CSS-T conditions become
\[
(D_1+\langle y\rangle)^{\star2} \subseteq (D_2')^\perp,
\quad
(D_1+\langle y\rangle)\star D_2 \subseteq (D_1')^\perp.
\]
Now $
(D_1+\langle y\rangle)^{\star2}
=
D_1^{\star2}
+
D_1\star y
+
\langle y\rangle,
$
and
\[
(D_1+\langle y\rangle)\star D_2
=
D_1\star D_2
+
y\star D_2.
\]
Since the original pair is CSS-T, the conditions reduce to
\[
D_1\star y +
\langle y\rangle \subseteq (D_2')^\perp
\quad\text{and}\quad
y\star D_2 \subseteq (D_1')^\perp.
\]
\smallskip
(b) Replacing $D_2$ by $D_2+\langle y\rangle$, we obtain
\[
D_1^{\star2} \subseteq (D_2')^\perp,
\quad
D_1\star (D_2+\langle y\rangle)
\subseteq (D_1')^\perp.
\]
Expanding the second condition gives
\[
D_1\star D_2 + D_1\star y
\subseteq (D_1')^\perp.
\]
Since the original pair is CSS-T, this reduces to $
y\star D_1 \subseteq (D_1')^\perp$.
\smallskip

(c) Replacing $D_1'$ by $D_1'+\langle y\rangle$, the second CSS-T condition becomes $
D_1\star D_2
\subseteq (D_1'+\langle y\rangle)^\perp.$
Since $
(D_1'+\langle y\rangle)^\perp
=
D_1'^\perp \cap \langle y\rangle^\perp, $
this is equivalent to
\[
D_1\star D_2 \subseteq D_1'^\perp
\quad\text{and}\quad
y \in (D_1\star D_2)^\perp.
\]
The first condition already holds, so we obtain the desired result.
\smallskip

(d) Replacing $D_2'$ by $D_2'+\langle y\rangle$, the first CSS-T condition becomes
$
D_1^{\star2}
\subseteq (D_2'+\langle y\rangle)^\perp
=
D_2'^\perp \cap \langle y\rangle^\perp.$
Since $D_1^{\star2}\subseteq D_2'^\perp$ already holds, this is equivalent to $
y \in (D_1^{\star2})^\perp.
$
\end{proof}





\section{Cyclic codes as constituent codes}\label{sec:cyclic}

We now focus on matrix-product codes whose constituent codes are cyclic.
In~\cite{square_matrix_prod}, distance properties of matrix-product codes
with cyclic constituent codes were investigated, while in~\cite{algebraic_CSS-T}
CSS-T codes were constructed from cyclic codes using algebraic methods.
In this section, we combine these two perspectives to study CSS-T codes
arising from matrix-product codes with cyclic constituents.
A linear code $C\subseteq \mathbb{F}_2^n$ is called \emph{cyclic} if it is invariant under cyclic shifts, i.e., whenever $(c_0,c_1,\ldots,c_{n-1})\in C$, then
$(c_{n-1},c_0,\ldots,c_{n-2})\in C.$
Equivalently, $C$ is an ideal of
$
R=\mathbb{F}_2[x]/\langle x^n-1\rangle
$
generated by a polynomial $g(x)$ such that $g(x)\mid x^n-1$.
Assume that $n$ and $q=2$ are coprime, $\mathbb{Z}_n:=\{1,\dots,n\}$, and let $\beta$ be a primitive $n$-th root of unity in an algebraic closure of $\mathbb{F}_2$. We use the notation $C(I)$ to denote a cyclic code with  \emph{generating set}
$
I=\{j\in \mathbb Z_n: g(\beta^j)\neq 0\}.$ 
{It is well-known that $I$ must be a \emph{cyclotomic set}, that is, a subset of $\Z_n$ such that $
I=q\cdot I,
$
where $
q\cdot I:=\{q\cdot i : i\in I\}$, with $q=2$. Under these assumptions, if $S,T\subset \mathbb Z_n$ are two cyclotomic sets, it holds that 
\begin{equation}\label{eq:containment_cosets}
    C(S)\subseteq C(T) \;\Longleftrightarrow \; S\subseteq T.
\end{equation}
}

\begin{proposition}[{\cite{square_matrix_prod}}]\label{dual_cyclic}
    Let $I\subseteq \mathbb{Z}_{n}$ be a cyclotomic set. Then
    $C(I)^{\perp}= C(-J)$, where $J=\mathbb{Z}_{n}\setminus I$.
\end{proposition}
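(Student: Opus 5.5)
The plan is the standard argument through the Mattson--Solomon transform (equivalently, the evaluation description of cyclic codes), with $n$ odd. Set $\mathbb{F}=\mathbb{F}_{2^m}$, where $m$ is the order of $2$ modulo $n$, so that $\beta\in\mathbb{F}$. I first want to describe $C(I)$ as the set of evaluation vectors
\[
\mathrm{ev}(f)=\bigl(f(\beta^0),f(\beta^1),\dots,f(\beta^{n-1})\bigr), \qquad f\in \mathcal{L}(I)=\Bigl\{\textstyle\sum_{i\in I} a_i X^{i} : \mathrm{ev}(f)\in\mathbb{F}_2^n\Bigr\},
\]
possibly after replacing $I$ by $-I$, depending on the convention relating the zeros of $g$ to the monomials. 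Because the paper's statement already uses $-J$, I will fix the convention so that the generating set $I$ is exactly the set of nonzeros. Under that convention a codeword $c(x)\in C(I)$ satisfies $c(\beta^j)=0$ for every $j\notin I$. This is the only property I will actually use, together with a dimension count: $\dim C(I)=|I|$, since $\deg g=n-|I|$.

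The main computation goes through polynomial products in $R$. For $a,b\in\mathbb{F}_2^n$, write $a(x)=\sum a_k x^k$ and $\tilde b(x)=\sum b_k x^{-k}$, the reciprocal taken in $R$. The coefficient of $x^0$ in $a(x)\tilde b(x)$ is $\sum_k a_kb_k=a\cdot b$. On the other hand, for $h\in R$ the constant coefficient is
\[
h_0=\frac1n\sum_{j\in\mathbb{Z}_n}h(\beta^j),
\]
and $n$ is odd, so $1/n=1$ in characteristic $2$. It follows that
\[
a\cdot b=\sum_{j} a(\beta^j)\,b(\beta^{-j}).
\]
Now take $a\in C(I)$ and $b\in C(-J)$. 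If $j\notin I$, then $a(\beta^j)=0$. If $j\in I$, then $-j\notin -J$ because $I$ and $J$ are disjoint, and hence $b(\beta^{-j})=0$. Every term vanishes, so $C(-J)\subseteq C(I)^\perp$.

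To get equality I will compare dimensions. We have $\dim C(-J)=|-J|=|J|=n-|I|=\dim C(I)^\perp$. For this I need that $-J$ is again a cyclotomic set; this is immediate, since $J$ is a complement of a set closed under multiplication by $2$, and negation commutes with that multiplication. Hence the two codes are equal.

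I expect the main obstacle to be bookkeeping with conventions rather than anything conceptual. The paper sets $\mathbb{Z}_n=\{1,\dots,n\}$, where $n$ plays the role of $0$. It also defines the generating set through the nonzeros of $g$, whereas some references use the zeros, and getting this wrong would replace $-J$ by $J$. I will therefore verify the key vanishing statement, $c(\beta^j)=0$ for $j\notin I$, directly from $g\mid c$. I will also double-check the inversion identity for $h_0$ by summing $\sum_j\beta^{jk}$, which equals $n$ if $k\equiv0$ and $0$ otherwise.
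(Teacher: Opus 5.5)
The paper does not prove this statement; it is quoted from \cite{square_matrix_prod}, so there is no internal proof to compare against. Your argument is correct and self-contained. With the paper's convention ($I$ is the set of nonzeros of $g$), every $c\in C(I)$ vanishes at $\beta^j$ for $j\notin I$, and $\dim C(I)=|I|$. The identity $a\cdot b=\sum_j a(\beta^j)b(\beta^{-j})$, valid because $n$ is odd, makes every term vanish when $a\in C(I)$ and $b\in C(-J)$. Since $-J$ is cyclotomic of size $n-|I|$, the inclusion $C(-J)\subseteq C(I)^\perp$ is then an equality.

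The more common textbook route goes through the check polynomial $h=(x^n-1)/g$. One has $C(I)^\perp=\langle h^*\rangle$, and the roots of the reciprocal $h^*$ are $\beta^{-j}$ for $j\in I$, so the nonzeros of the dual are $\mathbb{Z}_n\setminus(-I)=-J$. That route needs no character sums; its dimension count is hidden inside the fact $C^\perp=\langle h^*\rangle$. Your Parseval-type identity is closer in spirit to the trace/evaluation proofs common with cyclotomic-set notation, and it gives a reusable orthogonality criterion.

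Two minor remarks. First, the evaluation description in your opening paragraph is never used and can be dropped. Second, your worry that a convention slip would turn $-J$ into $J$ is unfounded. The map $I\mapsto-(\mathbb{Z}_n\setminus I)$ commutes with both $I\mapsto -I$ and $I\mapsto\mathbb{Z}_n\setminus I$. Hence the formula $C(I)^\perp=C(-J)$ reads the same whether $I$ indexes the nonzeros, the zeros, or the monomials of an evaluation description.
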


\begin{proposition}[{\cite[Proposition 4.3]{square_matrix_prod}}]\label{sum_prod_cyclic}
 Let $I_1,I_2\subseteq \mathbb{Z}_{n}$ be cyclotomic sets. Then
 $$
 C(I_1)+C(I_2)=C(I_1\cup I_2),\quad C(I_1)\star C(I_2)=C(I_1+I_2).
 $$
\end{proposition}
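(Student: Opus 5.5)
The statement to prove is Proposition~\ref{sum_prod_cyclic}: $C(I_1)+C(I_2)=C(I_1\cup I_2)$ and $C(I_1)\star C(I_2)=C(I_1+I_2)$. I will work in the evaluation (Mattson--Solomon) description of cyclic codes. Since $\gcd(n,2)=1$, the code $C(I)$ is the set of evaluation vectors $\bigl(f(\beta^0),\dots,f(\beta^{n-1})\bigr)$, where $f$ runs over the polynomials in $\overline{\mathbb{F}}_2[x]$ supported on monomials $x^i$ with $i\in I$ (reduced modulo $x^n-1$) whose evaluation vector lies in $\mathbb{F}_2^n$. Equivalently, after extending scalars to a splitting field $\mathbb{F}=\mathbb{F}_{2^m}$, we have $C(I)\otimes\mathbb{F}=\langle \mathrm{ev}(x^i): i\in I\rangle_{\mathbb{F}}$. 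The vectors $\mathrm{ev}(x^i)$, $i\in\mathbb{Z}_n$, form a basis of $\mathbb{F}^n$: they are the rows of a Vandermonde (DFT) matrix. Under the convention $I=\{j:g(\beta^j)\neq 0\}$ this is the standard identification, possibly up to the harmless sign change $j\mapsto -j$, which is applied uniformly and therefore does not affect either claim. I will first fix this identification, then prove both identities over $\mathbb{F}$, and finally descend to $\mathbb{F}_2$.

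For the sum, I argue over $\mathbb{F}$. The span of $\{\mathrm{ev}(x^i):i\in I_1\}$ plus the span of $\{\mathrm{ev}(x^i):i\in I_2\}$ is the span of $\{\mathrm{ev}(x^i): i\in I_1\cup I_2\}$. Hence $(C(I_1)+C(I_2))\otimes\mathbb{F}=C(I_1\cup I_2)\otimes\mathbb{F}$. Because $I_1\cup I_2$ is again a cyclotomic set, the right-hand side is the extension of a binary cyclic code. Both sides are defined over $\mathbb{F}_2$, and extension of scalars is injective on subspaces, so the binary codes coincide. Alternatively, one can use generator polynomials: $C(I_1)+C(I_2)$ is generated by $\gcd(g_1,g_2)$, whose nonzeros are exactly $I_1\cup I_2$.

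For the product, the key identity is $\mathrm{ev}(x^i)\star\mathrm{ev}(x^j)=\mathrm{ev}(x^{i+j})$, with indices taken mod $n$. This gives $(C(I_1)\otimes\mathbb{F})\star(C(I_2)\otimes\mathbb{F})=\langle \mathrm{ev}(x^k):k\in I_1+I_2\rangle_{\mathbb{F}}=C(I_1+I_2)\otimes\mathbb{F}$, where $I_1+I_2$ is cyclotomic since $2(i+j)=2i+2j$. It then remains to show that the Schur product commutes with extension of scalars, i.e.\ $(A\star B)\otimes\mathbb{F}=(A\otimes\mathbb{F})\star(B\otimes\mathbb{F})$ for binary codes $A,B$. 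This holds because both sides are spanned by the products $a\star b$ of binary basis vectors, and the Schur product is bilinear. Injectivity of extension of scalars then gives the binary equality.

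I expect the main obstacle to be the descent step together with getting the convention right: the generating set $I$ is defined via the nonzeros of $g$, and I must check that this matches the support of the Mattson--Solomon polynomials, rather than its negative. I would first verify this on $C(\{0\})=\langle\mathbf 1\rangle$ to fix the sign, and then invoke Galois invariance (the code is binary iff its MS support is closed under multiplication by $2$) to justify the descent cleanly.
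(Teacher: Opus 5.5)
The paper gives no proof of this proposition; it imports it from \cite[Proposition 4.3]{square_matrix_prod}. Your argument is correct and is the standard evaluation (discrete Fourier transform) proof. Both identities hold after extending scalars to a field $\mathbb{F}\ni\beta$, because $\mathrm{ev}(x^i)\star\mathrm{ev}(x^j)=\mathrm{ev}(x^{i+j})$ and the vectors $\mathrm{ev}(x^k)$ form a basis of $\mathbb{F}^n$. Sum and Schur product commute with extension of scalars by bilinearity, and $A\mapsto\langle A\rangle_{\mathbb{F}}$ is injective on binary subspaces.

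Two small remarks. First, under the paper's convention (generating set $=$ nonzeros of $g$), the precise identification is $\langle C(I)\rangle_{\mathbb{F}}=\langle \mathrm{ev}(x^{k}):k\in -I\rangle_{\mathbb{F}}$. The reason is that $\mathrm{ev}(x^k)$, read as a polynomial, takes the value $\sum_{i}\beta^{i(k+j)}=n\,\delta_{j,-k}$ at $\beta^j$, and $n$ is odd. This is most easily proved directly from the zeros. The space $\langle C(I)\rangle_{\mathbb{F}}$ is the ideal generated by $g$ in $\mathbb{F}[x]/\langle x^n-1\rangle$, i.e.\ the set of $c$ with $c(\beta^j)=0$ for all $j\notin I$. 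This space has dimension $|I|$ and contains the $|I|$ independent vectors $\mathrm{ev}(x^{-j})$, $j\in I$, so it is their span. With this in hand, you do not need a separate Galois-descent step: both sides of each identity are already binary codes, so injectivity of extension of scalars suffices.

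Second, your proposed sanity check on $C(\{0\})$ cannot fix the sign, since $-\{0\}=\{0\}$. It only confirms that $I$ is matched with nonzeros rather than zeros. As you say, the sign is immaterial anyway, since $-(I_1\cup I_2)=(-I_1)\cup(-I_2)$ and $-(I_1+I_2)=(-I_1)+(-I_2)$.
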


We now present our first result describing when a pair of $(u \mid u+v)$ matrix-product codes
with \emph{cyclic} constituent codes forms a CSS-T pair. We use the above description
of cyclic codes in terms of cyclotomic sets.

\begin{theorem}\label{Th:cyclic}
Let $I_1,I_2,I_1',I_2'$ be cyclotomic sets in $\mathbb{Z}_n$ such that $I_2\subseteq I_1$,  $I_2'\subseteq I_1'$, $I_1'\subseteq I_1$, $I_2'\subseteq I_2.$
Then the pair $
\bigl([C(I_1),C(I_2)]\mathcal{A},\ [C(I_1'),C(I_2')]\mathcal{A}\bigr)
$
is a CSS-T pair if and only if
$$
    I_1+I_1\subseteq -J_2',\hbox{ and }I_1+I_2\subseteq -J_1',
$$
or equivalently, if and only if
$$
   n\notin I_1+I_1+I_2',\hbox{ and } n\notin I_1+I_2+I_1'.
$$
\end{theorem}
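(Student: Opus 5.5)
The plan is to reduce to Theorem~\ref{CSS-T_matrix_product} and then translate each Schur-product inclusion into a statement about cyclotomic sets.

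\emph{Checking the hypotheses.} Set $D_1=C(I_1)$, $D_2=C(I_2)$, $D_1'=C(I_1')$ and $D_2'=C(I_2')$. By \eqref{eq:containment_cosets}, the inclusions $I_2\subseteq I_1$, $I_2'\subseteq I_1'$, $I_1'\subseteq I_1$ and $I_2'\subseteq I_2$ give $D_2\subseteq D_1$, $D_2'\subseteq D_1'$, $D_1'\subseteq D_1$ and $D_2'\subseteq D_2$. By the remark following Theorem~\ref{CSS-T_matrix_product}, these imply $C_2\subseteq C_1$. So Theorem~\ref{CSS-T_matrix_product} applies: the pair is CSS-T if and only if $D_1^{\star2}\subseteq D_2'^\perp$ and $D_1\star D_2\subseteq D_1'^\perp$.

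\emph{Translating to cyclotomic sets.} Proposition~\ref{sum_prod_cyclic} gives $D_1^{\star2}=C(I_1+I_1)$ and $D_1\star D_2=C(I_1+I_2)$. Proposition~\ref{dual_cyclic} gives $D_2'^\perp=C(-J_2')$ and $D_1'^\perp=C(-J_1')$, where $J_i'=\mathbb{Z}_n\setminus I_i'$. Before invoking \eqref{eq:containment_cosets}, one checks that $I_1+I_1$, $I_1+I_2$ and $-J_i'$ are cyclotomic sets:
\begin{itemize}
\item $2(a+b)=2a+2b$, so sums of cyclotomic sets are cyclotomic.
\item Negation commutes with multiplication by $2$, so negations of cyclotomic sets are cyclotomic.
\item Multiplication by $2$ is a bijection of $\mathbb{Z}_n$ because $\gcd(n,2)=1$, so complements of cyclotomic sets are cyclotomic.
\end{itemize}
Then \eqref{eq:containment_cosets} turns the two inclusions into $I_1+I_1\subseteq -J_2'$ and $I_1+I_2\subseteq -J_1'$, which is the first form of the statement.

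\emph{The equivalent form.} The second form follows from an elementary set identity. For a set $S$ and a cyclotomic set $I'$,
\[
S\subseteq -J' \iff -S\cap I'=\emptyset \iff \text{there are no } s\in S,\ i\in I' \text{ with } s+i\equiv 0 \pmod n.
\]
Since $\mathbb{Z}_n=\{1,\dots,n\}$ represents $0$ by $n$, this says $n\notin S+I'$. Applying it with $S=I_1+I_1$, $I'=I_2'$ and with $S=I_1+I_2$, $I'=I_1'$ gives the conditions $n\notin I_1+I_1+I_2'$ and $n\notin I_1+I_2+I_1'$.

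\emph{Main obstacle.} The only delicate step is applying \eqref{eq:containment_cosets} correctly. The sets involved must be cyclotomic, as checked above. The sumset must be read modulo $n$ with representatives in $\{1,\dots,n\}$. The sign convention of Proposition~\ref{dual_cyclic} (negation of the complement) must be tracked carefully so that it produces exactly the condition ``$n\notin$''.
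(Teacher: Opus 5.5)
Your proposal is correct and follows essentially the same route as the paper. Both reduce to Theorem~\ref{CSS-T_matrix_product}, rewrite the two Schur-product inclusions via Propositions~\ref{sum_prod_cyclic} and~\ref{dual_cyclic}, apply \eqref{eq:containment_cosets}, and finish with the identity $S\subseteq -J \iff n\notin S+I$; your explicit check that sumsets, negatives and complements of cyclotomic sets are again cyclotomic (needed to invoke \eqref{eq:containment_cosets}) fills in a detail the paper leaves implicit.
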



\begin{proof}
Apply Theorem~\ref{CSS-T_matrix_product} with
\[
D_1=C(I_1),\quad D_2=C(I_2),\quad D_1'=C(I_1'),\quad D_2'=C(I_2').
\]
Since $I_1'\subseteq I_1$ and $I_2'\subseteq I_2$, by \eqref{eq:containment_cosets} we have
$C(I_1')\subseteq C(I_1)$ and $C(I_2')\subseteq C(I_2)$, hence $C_2\subseteq C_1$.
Therefore the CSS-T condition is equivalent to
\[
C(I_1)^{\star 2}\subseteq C(I_2')^\perp,
\qquad
C(I_1)\star C(I_2)\subseteq C(I_1')^\perp.
\]
By Propositions~\ref{sum_prod_cyclic} and~\ref{dual_cyclic},
\[
C(I_1)^{\star 2}=C(I_1+I_1),
\qquad
C(I_1)\star C(I_2)=C(I_1+I_2),
\]
and $
C(I_2')^\perp = C(-J_2'),$ $
C(I_1')^\perp = C(-J_1')$.
Thus the CSS-T condition becomes
\[
C(I_1+I_1)\subseteq C(-J_2'),
\qquad
C(I_1+I_2)\subseteq C(-J_1').
\]
 Hence, by \eqref{eq:containment_cosets} the above is equivalent to 
$I_1+I_1\subseteq -J_2'$,  $I_1+I_2\subseteq -J_1'.$
Finally, using the standard relation
\[
S\subseteq -J \quad\Longleftrightarrow\quad 0\notin S+I
\quad\Longleftrightarrow\quad
n\notin S+I,
\]
where $J=\mathbb{Z}_n\setminus I$, we obtain the equivalent conditions
\[
n\notin I_1+I_1+I_2',
\qquad
n\notin I_1+I_2+I_1'.\qedhere
\]\end{proof}

\begin{example}
 Consider $n=15$, and $I_1=\{1,2,4,8,15\}$, $I_2=I_1'=I_2'=\{1,2,4,8\}$. Given that  
 $$
 15\notin I_1+I_1+I_2',\quad 15\notin I_1+I_2+I_1',
 $$
 from Theorem~\ref{Th:cyclic} and {\rm \cite[Corollary 4.4]{square_matrix_prod}} we obtain a CSS-T code with parameters $[[30,1,\geq3]]$ given by $C_1,C_2$, where
 $$
 C_1=[C(I_1),C(I_2)]\mathcal{A},\quad C_2=[C(I_1'),C(I_2')]\mathcal{A}.
 $$
\end{example}

We now characterize maximal CSS-T pairs in the setting of cyclic matrix-product codes.

\begin{theorem}\label{thm:cyclic_maximal}
Let $I_1,I_2,I_1',I_2'$ be cyclotomic sets in $\mathbb{Z}_n$ such that $
I_2\subseteq I_1$, $I_2'\subseteq I_1'$, $I_1'\subseteq I_1$,  $I_2'\subseteq I_2$, 
and suppose that $
(C_1,C_2)
=
\bigl([C(I_1),C(I_2)]\mathcal{A},\ [C(I_1'),C(I_2')]\mathcal{A}\bigr)
$
is a CSS-T pair. For $i=1,2$, denote by $
J_i := \mathbb{Z}_n \setminus I_i$,  $J_i' := \mathbb{Z}_n \setminus I_i'$
the complements of the defining sets.
Then $(C_1,C_2)$ is maximal in $\mathcal{P}$ if and only if
\begin{enumerate}
    \item $(I_1+I_2')\cup (I_2+I_1')\cup (I_2+I_2') = -J_1,$
    \item $I_1+I_1' = -J_2,$
     \item $I_1+I_2 = -J_1',$
    \item $I_1+I_1 = -J_2'.$
\end{enumerate}
\end{theorem}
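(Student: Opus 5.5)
We follow the pattern of the proof of Theorem~\ref{Th:cyclic}: translate the four equalities of Theorem~\ref{thm:mp_maximal_conditions_components} into equalities of cyclotomic sets. Set $D_1=C(I_1)$, $D_2=C(I_2)$, $D_1'=C(I_1')$ and $D_2'=C(I_2')$.

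The nesting hypotheses $I_2\subseteq I_1$ and $I_2'\subseteq I_1'$ give $D_2\subseteq D_1$ and $D_2'\subseteq D_1'$ by \eqref{eq:containment_cosets}. Together with the CSS-T assumption, this puts us exactly in the setting of Theorem~\ref{thm:mp_maximal_conditions_components}. That theorem characterizes maximality through conditions 1)--4) on the constituents. Here I read the maximality in the statement as the one in Corollary~\ref{cor:mp_maximal_conditions}, i.e.\ the conditions $C_1^\perp=C_1\star C_2$ and $C_2^\perp=C_1^{\star 2}$ of Theorem~\ref{thm:csst_maximal}, which are precisely what the corollary decomposes.

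Next I rewrite each side of conditions 1)--4) as a cyclic code.

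\textbf{Left-hand sides.} By Proposition~\ref{dual_cyclic},
\[
D_1^\perp=C(-J_1),\quad D_2^\perp=C(-J_2),\quad D_1'^\perp=C(-J_1'),\quad D_2'^\perp=C(-J_2').
\]

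\textbf{Right-hand sides.} By Proposition~\ref{sum_prod_cyclic},
\[
D_1\star D_1'=C(I_1+I_1'),\quad D_1\star D_2=C(I_1+I_2),\quad D_1^{\star 2}=C(I_1+I_1).
\]
For the sum in condition 1), the same proposition applied twice, once for products and once for sums, gives
\[
D_1\star D_2'+D_2\star D_1'+D_2\star D_2'=C\bigl((I_1+I_2')\cup(I_2+I_1')\cup(I_2+I_2')\bigr).
\]

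Each condition of Theorem~\ref{thm:mp_maximal_conditions_components} is then an equality $C(S)=C(T)$ of cyclic codes. Applying \eqref{eq:containment_cosets} in both directions, this is equivalent to $S=T$, which yields exactly items 1)--4) of the statement.

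The main point needing care is that \eqref{eq:containment_cosets} applies only to cyclotomic sets. So I must check that $-J_i$, $-J_i'$, the sumsets $I+I'$ and their unions are all cyclotomic. This holds because multiplication by $2$ commutes with negation, complementation in $\mathbb{Z}_n$ (a bijection), sumsets and unions. A minor point is that Theorem~\ref{thm:mp_maximal_conditions_components} is phrased for $\mathcal{M}$ while the statement says $\mathcal{P}$. I would route the argument directly through Theorem~\ref{thm:csst_maximal} combined with Corollary~\ref{cor:mp_maximal_conditions} and Lemma~\ref{lemma:mp_equivalence}, so that the equivalence with maximality in $\mathcal{P}$ is the one actually used.
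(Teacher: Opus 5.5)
Your proposal is correct and is essentially the paper's proof: you reduce maximality to the four constituent equalities of Theorem~\ref{thm:mp_maximal_conditions_components}, rewrite both sides with Propositions~\ref{dual_cyclic} and~\ref{sum_prod_cyclic}, and compare defining sets via \eqref{eq:containment_cosets}. Your explicit check that $-J_i$, $-J_i'$, the sumsets and their unions are cyclotomic, and your routing through Theorem~\ref{thm:csst_maximal} so that the equivalence is really with maximality in $\mathcal{P}$ rather than $\mathcal{M}$, are sound refinements of steps the paper leaves implicit.
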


\begin{proof}
By Theorem~\ref{thm:mp_maximal_conditions_components}, the pair $(C_1,C_2)$
is maximal if and only if
\begin{align*}
    D_1^{\perp}=
D_1\star D_2'
+
D_2\star& D_1'
+
D_2\star D_2',\quad
D_2^{\perp}
=
D_1\star D_1',\\
D_1'^{\perp}
=
D_1&\star D_2,\quad
D_2'^{\perp}
=
D_1^{\star2}
\end{align*}


where $D_1=C(I_1)$, $D_2=C(I_2)$,
$D_1'=C(I_1')$, and $D_2'=C(I_2')$.
Using Proposition~\ref{sum_prod_cyclic}, we have
\[
C(S)\star C(T)=C(S+T),
\qquad
C(S)^{\star2}=C(S+S),
\]
and by Proposition~\ref{dual_cyclic}, $
C(I)^\perp = C(-J),$ where  $J=\mathbb{Z}_n\setminus I.$
Thus the four equalities above translate into
\begin{align*}
C(-J_1)
&=
C(I_1+I_2')
+
C(I_2+I_1')
+
C(I_2+I_2'),\\
C(-J_2)
&=
C(I_1+I_1'),\:
C(-J_1')
=
C(I_1+I_2),\\
C(-J_2')
&=
C(I_1+I_1).
\end{align*}
Since sums of cyclic codes correspond to unions of defining sets,
and equality of cyclic codes corresponds to equality of defining sets,
the result follows.
\end{proof}

\section{Conclusions and Future Work}
In this work, we analyzed CSS-T conditions for matrix-product codes arising from the
$(u \mid u+v)$-construction. We obtained necessary and sufficient conditions for a pair
of such matrix-product codes $(C_1,C_2)$ to be a CSS-T pair, and we characterized
maximality of these pairs in the CSS-T poset. We further translated these conditions
to cyclic constituent codes, obtaining explicit cyclotomic-set criteria for CSS-T and
maximal CSS-T pairs. These results provide a concrete bridge between the algebraic
structure of matrix-product codes and the CSS-T condition, giving practical criteria
that can be used to construct and classify new families of quantum codes supporting
transversal T-gates. Future work will include extending this analysis to weighted
Reed--Muller codes as constituent codes in the matrix-product construction, as well as
a systematic study of the parameters obtained from both the cyclic and weighted
Reed--Muller settings~\cite{reed-muller}.

\section*{Acknowledgments}
This project began during the “Coding Theory and Cryptography Summer School and Collaboration Workshop,” which took place in July 2024 at the Stager Center for International Scholarship – Virginia Tech in Switzerland. The authors sincerely thank the organizers for creating an environment that made this collaboration possible. This work was partially supported by NSF grant DMS-240155, Grant PID2022-138906NB-C21 funded by MCIN/AEI/ 10.13039/501100011033 and by ERDF/UE, and by Grant CLU-2025-1-02- IMUVA funded by Department of Education of the Junta de Castilla y Le\'on and FEDER Funds, the INdAM - GNSAGA Project CUP E53C24001950001, and by the P500PT-222344 SNSF project.


\IEEEtriggeratref{13}
\bibliographystyle{ieeetr}
\bibliography{references}


\end{document}





